\newcommand{\Jcal}{\mathcal{J}}
\newcommand{\Pcal}{\mathcal{P}}
\newcommand{\Ycal}{\mathcal{Y}}
\newcommand{\Ebb}{\mathbb{E}}
\newcommand{\Pbb}{\mathbb{P}}
\newcommand{\Rbb}{\mathbb{R}}
\newcommand{\Cb}{\mathcal{C}_b}
\newcommand{\Cp}{\mathcal{C}_p}
\newcommand{\Cd}{\mathcal{C}_d}
\newcommand{\Ind}[1]{\mathbb{I}_{\{#1\}}}
\newcommand{\abs}[1]{\left|#1\right|}
\newcommand{\sgn}{\text{sgn}}
\DeclareMathOperator*{\argmin}{argmin}
\newtheorem{definition}{Definition}
\newtheorem{proposition}{Proposition}
\newtheorem{theorem}{Theorem}
\begin{document}

\begin{frontmatter}
\title{Power Control for Packet Streaming with Head-of-Line
  Deadlines}

\author{Neal~Master\corref{master}\fnref{master}} 
\ead{nmaster@stanford.edu} 
\address{Department of Electrical Engineering, Stanford University,
  Stanford CA, 94305} \cortext[master]{Corresponding author.}
\author{Nicholas~Bambos\corref{}}
\ead{bambos@stanford.edu}
\address{Department of Electrical Engineering and Department of
  Management Sciences \& Engineering, Stanford University, Stanford
  CA, 94305}
\fntext[master]{Neal Master is funded by Stanford University through a
  Stanford Graduate Fellowship (SGF) in Science \& Engineering.}

\begin{abstract}
  We consider a mathematical model for streaming media packets (as the
  motivating key example) from a transmitter buffer to a receiver over
  a wireless link while controlling the transmitter power (hence, the
  packet/job processing rate). When each packet comes to the
  head-of-line (HOL) in the buffer, it is given a deadline $D$ which
  is the maximum number of times the transmitter can attempt
  retransmission in order to successfully transmit the packet. If this
  number of transmission attempts is exhausted, the packet is ejected
  from the buffer and the next packet comes to the HOL. Costs are
  incurred in each time slot for holding packets in the buffer,
  expending transmitter power, and ejecting packets which exceed their
  deadlines. We investigate how transmission power should be chosen so
  as to minimize the total cost of transmitting the items in the
  buffer. We formulate the optimal power control problem in a dynamic
  programming framework and then hone in on the special case of fixed
  interference. For this special case, we are able to provide a
  precise analytic characterization of how the power control should
  vary with the backlog and how the power control should react to
  approaching deadlines. In particular, we show monotonicity results
  for how the transmitter should adapt power levels to the backlog and
  approaching deadlines. We leverage these analytic results from the
  special case to build a power control scheme for the general
  case. Monte Carlo simulations are used to evaluate the performance
  of the resulting power control scheme as compared to the optimal
  scheme. The resulting power control scheme is sub-optimal but it
  provides a low-complexity approximation of the optimal power
  control. Simulations show that our proposed schemes outperform
  benchmark algorithms. We also discuss applications of the model to
  other practical operational scenarios.
\end{abstract}

\end{frontmatter}

\section{Introduction} 

Packet streaming over wireless channels is a ubiquitous technology
today.  Indeed, the rise of smartphones, tablets, wearable
electronics, and the Internet of Thing (IoT) has led to an increasing
interest in wireless multimedia applications. However, real-time
mobile multimedia streaming poses challenges in both theory and
practice. Mobile devices have strict power limitations, and as they
become more compact, energy efficiency becomes increasingly
important. In addition, multimedia streaming is time-sensitive with
both latency and jitter constraints. These constraints are complicated
by the fact that wireless channel quality fluctuates stochastically in
both time and space. These limitations and constraints of a wireless
streaming system are typically at odds with one another which makes
practical yet effective control schemes difficult to formulate.

In the literature, wireless streaming over cellular networks has been
studied in the context of downlink packet scheduling
\cite{Fattah_WirelessScheduling_2002} \cite{Dua_CD2_2010}.  In such
systems, a base station performs the task of transmitting multimedia
streams over various wireless channels to different end users. Since
the base station has limited resources, there is a need for scheduling
algorithms for supporting the multitude of streams. In this case,
channel quality and time-sensitivity require consideration but the
base station does not have the same stringent power restrictions as a
mobile device.

A different model is the case of device-to-device (D2D) transmission
\cite{Asadi_D2D_2013}. In these systems, devices communicate directly
over local wireless channels without intermediate base stations or
access points. D2D networks have been implemented and studied in the
context of LTE \cite{Zulhasnine_D2DLTE_2010} as well as Wi-Fi
\cite{Camps-Mu_D2DWifi_2013}. D2D communication networks are flexible
and can be used for applications as diverse as content distribution
and vehicular communication. As a result, a subset of a D2D network
could likely be a device streaming video to another device. For
example, if a particular user has already downloaded a video,
streaming the content directly to a nearby user avoids complexity of
downlink scheduling and hence reduces the system of interest to a
point-to-point communication model. In this paper we consider such a
point-to-point system and develop schemes which incorporate power
constraints, varying wireless channel quality, as well as latency and
jitter constraints. Rather than focusing on a specific set of
technologies, we use an abstract mathematical model and prove
structural results about the optimal transmitter power control. These
properties are used to develop low-complexity schemes and the design
principles of these schemes are demonstrated via simulation. The
abstract nature of the model allows for the principles to be applied
to a variety of technologies.

\subsection{Related Work}
There has been considerable work on the case of streaming multimedia
over a wireless link. Li et al. \cite{Li_PowerPlayout_2006}
formulated the problem as a joint transmitter power and receiver
playout rate control problem. They used the optimal control to
motivate useful heuristics for jointly choosing the transmitter power
and the receiver playout rate when communicating over an
interference-limited wireless link. These heuristics can be used to
ensure ``smooth'' multimedia playout but they do not explicitly
capture jitter constraints. In particular, the heuristics do not make
use of any form of packet deadlines.

Xing et al. \cite{Xing_D2DWifi_2009} conducted an experimental study
of D2D video delivery using Wi-Fi ad-hoc mode. This work gives
empirical insights into the technological details of multimedia
streaming over Wi-Fi. Unfortunately, power control is not yet part of
the IEEE 802.11 standard (which is the basis for Wi-Fi)
\cite{IEEE_80211}, so the study was unable to incorporate any
investigation of transmitter power control.

There has been substantial work on power allocation at the network
level; a thorough survey is presented by Chiang et
al. \cite{Chiang_Power_2008}.  Le \cite{Le_D2DResources_2012} studied
fair resource allocation in D2D Orthogonal Frequency Division Multiple
Access (OFDMA)-based wireless cellular networks. Power was included as
a resource and the focus was on general traffic rather than
specifically multimedia. Yu et al. \cite{Yu_D2DResources_2011} studied
how to share resources between a D2D network and traditional
underlying cellular network. In a general setting (not necessarily
D2D), Kandukuri and Boyd \cite{Kandukuri_Outage_2002} used a convex
programming framework to optimally allocate power across a network so
as to minimize outage probabilities. These works address the question
of how to assign a power budget to each transmitter in a network of
transmitters. Our work builds on this idea by investigating how a
transmitter should dynamically choose to use the allocated power
budget.

Dynamic power control and delay control have been considered in the
network utility maximization literature. For example, O'Neill et
al. \cite{ONeill_NUM_2008} used a convex optimization framework to
derive power and rate control policies for several modulation
schemes. Although rate is implicitly related to the delay of
information, this work did not explicitly consider deadlines or delay
bounds. In another line of research, Neely \cite{Neely_Delay_2013}
used Lyapunov based techniques to approximately maximize throughput
utility based on explicit head-of-line packet delays. Neely
\cite{Neely_EnergyDelay_2007} also investigated the asymptotic
trade-off between energy and delay in a wireless downlink system. Our
work is different not only in the approach (we use a dynamic
programming framework) but also in that our results are not
asymptotic.

Outside of the communication engineering community, queueing models
with deadlines have been considered in the scheduling
literature. Delay-sensitive scheduling has applications in networking
of computers \cite{Kim_Broadcasts_2004} and patient triage
\cite{Argon_Scheduling_2008}. More recently, there has been work on
scheduling with ``impatient'' users \cite{Dalal_Impatient_2005}. In
such a model, jobs do not have hard deadlines but users have
preferences which create soft deadlines. In all of these works, the
focus is on scheduling jobs under different notions of delay
sensitivity where each server completes jobs with a fixed rate. In the
current work, we are concerned with a complementary issue. The
schedule for the packets is fixed (they are served sequentially) but
we control the service rate (via the transmitter power) and are
interested in understanding how the service rate should vary with the
number of jobs (packets) and the approaching deadlines. Service rate
control for a single Markovian queue has been previously considered
\cite{George_ServiceRateControl_2001}.  Besides the focus on wireless
streaming, our model differs in a few ways with the most notable
difference being the head-of-line deadlines.

Early limited results of this work were presented in our previous
conference publications \cite{Master_HOL_2014, Master_ACC_2015}, where
a baseline model was developed and and some preliminary analysis
done. The current paper presents a detailed mathematical treatment of
the full system model (including random interference), resulting
control schemes and emerging design principles, and incorporates
additional key metrics into the performance evaluation.

\subsection{Contributions}
Our goal in this paper is to study a system in which a transmitter is
streaming a sequence of multimedia packets over a time-varying
wireless link. Because of the aforementioned time-sensitivity of the
data, there is a bound on the number of transmission attempts for each
packet, i.e. there are head-of-line (HOL) packet deadlines. We
formulate an optimal control problem for the transmitter to choose the
optimal transmit power level as a function of the residual HOL
deadline and the remaining number of packets. We provide tight
theoretical results on the monotoniticy properties of the optimal
policy. In particular, we analytically characterize 1) how the
transmitter power should vary with the backlog and 2) how the
transmitter should react to approaching deadlines. Based on these
structural results, we next develop 3) low-complexity power control
schemes, which approximate the optimal control efficiently. We
evaluate the performance of these schemes via simulation. The abstract
mathematical framework of this work allows for these design principles
and approximation techniques to be applied to a variety of
technologies.

\subsection{Extended Model Applications}
The model developed in Section~\ref{sec:model} actually maps to
general situations of 1) {\em streamed/sequential} processing of jobs
with 2) {\em head-of-line (HOL) deadlines}, possibly in the presence
of a 3) {\em fluctuating environment}. The issue is to control the
{\em job processing rate} to minimize the overall processing cost,
comprised of a 1) {\em rate cost} (higher rate, higher cost), a 2)
{\em backlog/holding cost} for keeping unprocessed jobs in the buffer,
and a 3) HOL {\em missed-deadline cost} incurred when the HOL job
misses its deadline and, thus, has to be dropped.

In this paper, we opt to project this general model to wireless media
streaming, because this is a key modern application. In this case, the
media packets are the jobs, the wireless channel is the fluctuating
environment, and the HOL deadline is the time by which the media
packet under transmission should successfully reach the receiver
(otherwise it is dropped and the next one advances to the HOL). The
HOL packet processing rate is induced by the selected transmission
power (hence, the terminology {\em power control}) in the media
streaming application.

The model, however, has other interesting applications too, in
particular for tasks with time-outs. For example, a high-level view of
(virtualized) {\em data centers} -- which captures the trade-off
between computation task latency, time-out rate, and computation
resource cost -- is the following. Computation tasks (of a given class
X) are queued up for FIFO processing.  When a task starts execution,
it will {\em time out} and be dropped, if it does not complete before
a time-out horizon (deadline). For example, such is the case for
certain database tasks (e.g., related to financial/e-commerce
transactions with finite active exposure horizons, due to security
reasons) or computation tasks which access time-sensitive data. The
HOL task of class X completes service in the current time slot with a
certain probability (service rate), which depends on 1) the
computation resources allocated to the class X task stream and 2) the
congestion level of the overall computation resources (random
environment), which serve various other classes (e.g., Y, Z, etc.) The
more resources are allocated to class X, the lower its task latency
and time-out rate. However, the cost of allocating more resources to
class X gets higher as the overall (fluctuating) resource environment
becomes more congested. The issue is to balance the cost of resources
allocated to class X against its task latency and time-out rate.

\subsection{Paper Outline}
The remainder of this paper is organized as follows: In Section
\ref{sec:model} we describe our system model including the wireless
channel, the packet queueing dynamics, and the associated costs.  We
use this model to formulate an optimal control problem. We explain the
practical difficulties of the original optimal control problem and, in
Section~\ref{sec:illustrative}, focus on an illustrative case of the
general problem. We provide numerical and analytical results for this
special case. In particular, we are able to write the optimal policy
in a semi-analytic form which does not involve the optimal cost-to-go
function. This allows us to prove the aforementioned structural
properties of the optimal policy. In Section~\ref{sec:SLBPC}, we
leverage the analytic results from the previous section to design a
low-complexity power control scheme called Sublinear-Backlog Power Control
(SLBPC) which does not require numerically solving the original optimal
control problem. In Section~\ref{sec:performance}, we demonstrate via
simulation that this low-complexity scheme is a good approximation of
the optimal power control. We conclude in Section
\ref{sec:conclusions}.

\section{System Model and Optimal Control\label{sec:model}}
In this section we mathematically model a discrete-time point-to-point
wireless communication system. We begin by presenting a standard model
for a stochastically fluctuating wireless channel. We then present the
system dynamics and costs for our streaming system. The overall model
is shown diagramatically in Figure~\ref{fig:model}. The model is used
to formulate our power control problem in an optimal control
framework. Finally, we discuss the practical difficulties associated
with an optimal control approach.

\begin{figure}[ht]
  \centering
  \includegraphics[width=0.8\textwidth]{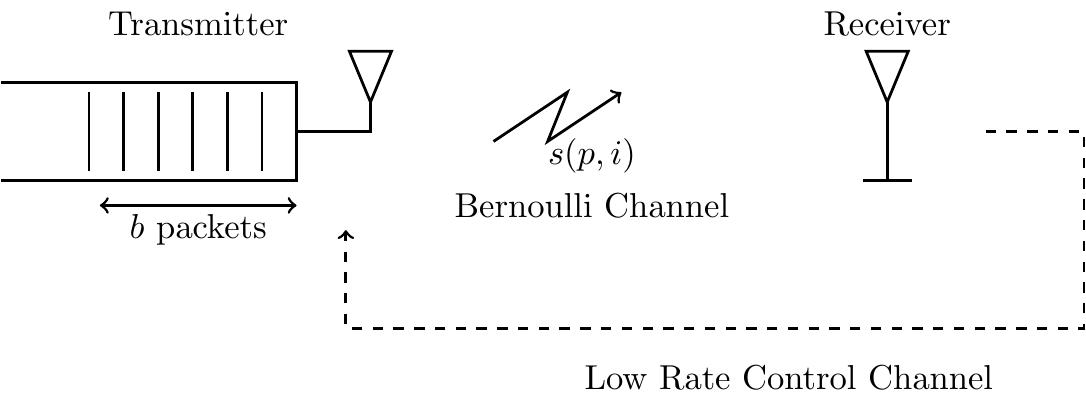}
  \caption{ A diagram of the system model. The transmitter has a
    buffer of packets which are sequentially transmitted over a
    wireless channel. If the transmitter uses power $p$ and the
    interference is $i$, the probability of successfully transmitting
    the packet is $s(p, i)$. In this case, the receiver sends an
    acknowledgment back to the transmitter over a low-rate control
    channel. With probability $1 - s(p, i)$, the transmission is
    unsuccessful. In this case, the transmitter will attempt to
    re-transmit in the next time slot. Each packet has a deadline so
    that the transmitter only has $D$ transmission attempts per
    packet. If these attempts are exhausted, the packet is ejected
    from the buffer. A more complete description of the system model
    can be found in Section~\ref{ssec:channel} and in
    Section~\ref{ssec:dynamics}. \label{fig:model}}
\end{figure}

\subsection{Fluctuating Wireless Channel \label{ssec:channel}}
In each time slot, if the transmitter uses power $p$ and the channel
has interference state $i$, the probability of successful transmission
is $s(p, i)$. We assume that each channel use is statistically
independent of every other channel use. The function $s(\cdot, \cdot)$
is assumed to have the following properties:
\begin{enumerate}[(i)]
\item For any fixed $i$, $p \mapsto s(p, i)$ is non-decreasing.
%\item For any fixed $p$, $i \mapsto s(p, i)$ is non-increasing.
\item For any $(p, i)$, $s(p, i) \in [0,1]$ (i.e. $s(p,i)$ is a probability).
\end{enumerate}

It is important to note that $i$ need not be a real value and this
offers considerable modeling flexibility. In particular, the
interference dynamics need not be on the same time scale as the power
update dynamics. For a specific example, consider channel interference
that takes either ``high'' or ``low'' values as in the Gilbert channel
model \cite{Gilbert_1960}. In addition, assume that the channel
dynamics are twice as fast as the power update dynamics. Then in each
time slots, the interference state could take 4 values: (``low'',
``low''), (``low'', ``high''), (``high'', ``low''), (``high'',
``high''). Hence, this model allows for power updates that are not on
the same time scale as interference changes.

More generally, the interference fluctuates stochastically according
to a finite-state Markov chain \cite{Wang_FSMC_1995} with transition
matrix $P \in \Rbb^{I \times I}$. So if at time $t$ the interference
is at $i_t$, then $\Pbb(i_{t+1} = j | i_t = i) = P_{ij}$.  Without
loss of generality, we will enumerate the interference states as $\{1,
2, \hdots, I\}$. We assume that the interference fluctuations are
statistically independent of all other aspects of the model. This is a
very rich and descriptive model for stochastic interference but this
comes at a practical cost which will be discussed in Section
\ref{ssec:drawbacks}.

Although the interference states need not correspond to real-valued
interference levels, it is sometimes useful to take this point of
view. For example, we could take $s(p, i) = p/(p + i)$ or $s(p, i) = 1
- \exp(-p/i)$ which are qualitatively similar and have been used for
building dynamic models in the network power control literature
\cite{Kandukuri_PCMA_2000}. Emperical studies have shown that for
positive constants $\{\beta_i\}_{i=0}^2$, sigmoidal functions of the
form $s(p, i) = \left(1 - \frac{1}{2}\exp\left(-\beta_0 \frac{p}{i} +
    \beta_1\right)\right)^{\beta_2}$ provide a more accurate
description of actual wireless channels
\cite{Son_Sigmoid_2006}. Although different functional forms can be
used to model different types of channels,
Figure~\ref{fig:success_prob} shows that these functional forms can
actually be quite similar. The two properties given above are
sufficient for most of this paper but we will return to the issue of
choosing a particular $s(\cdot, \cdot)$ in Section~\ref{sec:SLBPC} and
Section~\ref{sec:performance}.

\begin{figure}[h]
  \centering
  \includegraphics[width=0.5\textwidth]{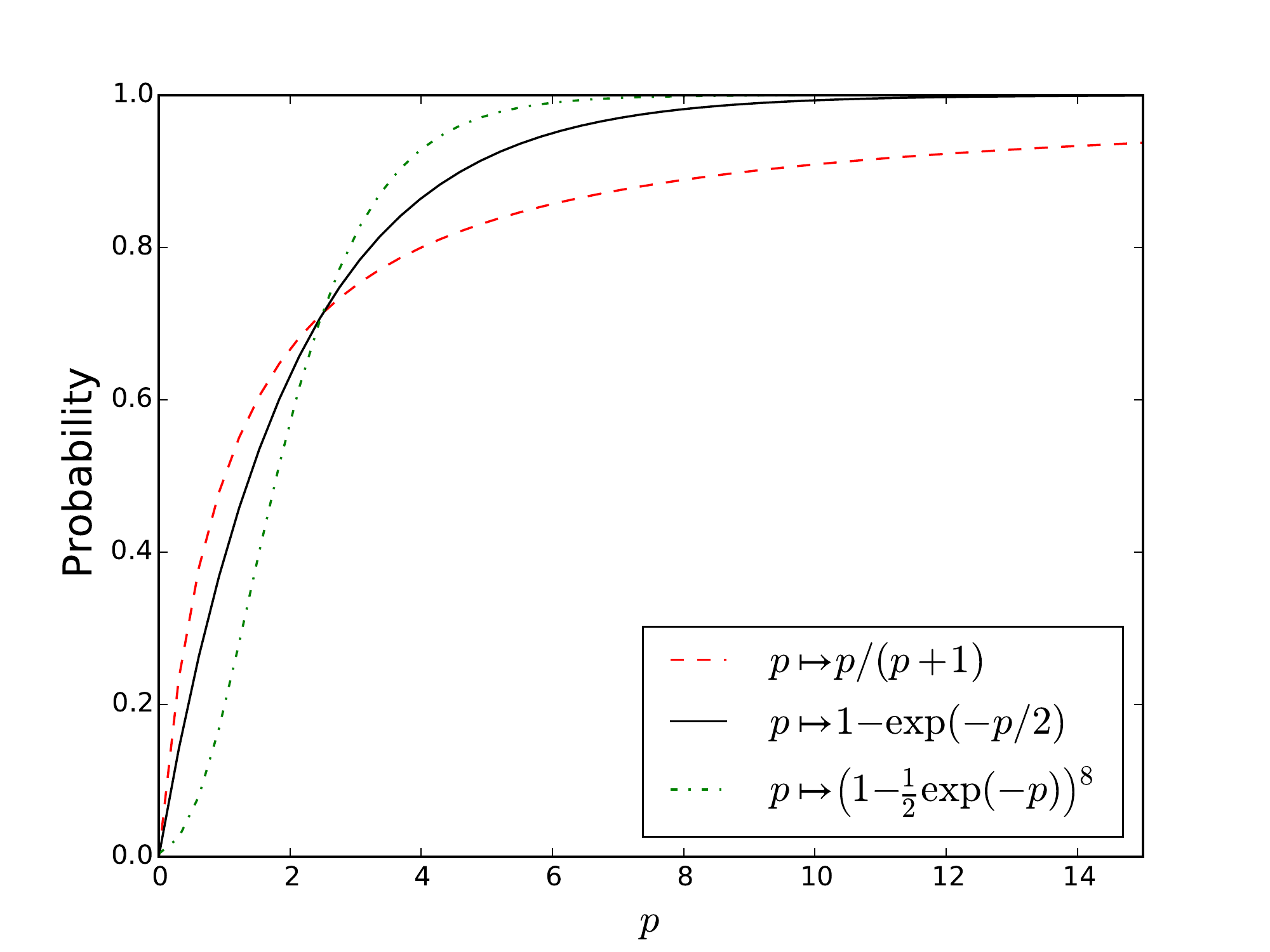}
  \caption{ Examples of different success probability
    functions\label{fig:success_prob}}
\end{figure}

\subsection{System Dynamics and Costs \label{ssec:dynamics}}
The transmitter is modeled as a packet buffer (job queue) with a
single transmitter (server). When a packet reaches the head-of-line
(HOL) it is assigned a residual deadline $d = D$. The transmitter
chooses a power $p$ from a finite set $\Pcal \subseteq \Rbb_{\geq 0}$
and transmits the packet. Assuming the interference state is $i$, with
probability $s(p, i)$ the transmission is successful. In this case,
the receiver will send an acknowledgment (ACK) to the transmitter over
a low rate control channel. The transmission is unsuccessful with
probability $1 - s(p, i)$. In this case, the receiver sends a negative
acknowledgment (NACK), the residual deadline $d$ is decremented by
one, and the transmitter reattempts the transmission. This continues
until either the packet is successfully transmitted or until the
residual deadline reaches zero. When the residual deadline reaches
zero, the HOL packet is ejected from the transmitter buffer. The next
packet becomes the new HOL packet and is assigned a residual deadline
of $D$, as before. Dropping the packet will temporarily degrade the
multimedia quality, but prevents jitter and delay. Note that the
receiver sends a receiver signal strength indication (RSSI) on the
control channel. As a result, we assume that the transmitter knows the
current interference state before a transmission power is chosen. This
assumption is relaxed in the performance evaluation in
Section~\ref{sec:performance}.

As the system evolves, the transmitter incurs the following non-negative costs in
each time slot. First, transmitting with power $p \in \Pcal$ costs $\Cp(p)$. We
assume that $\Cp(\cdot)$ is non-decreasing which reflects a natural desire to use
less power. Second, the transmitter incurs a backlog cost of $\Cb(b)$ when the
packet backlog is $b$. We assume that $\Cb(\cdot)$ is non-decreasing. This
creates what is known in the queueing literature as ``backlog pressure''
\cite{Walrand_Queueing_1988}. The backlog pressure reduces the overall latency
experienced by the sequence of packets. Finally, when a HOL packet is dropped
because the residual deadline reaches zero, the transmitter incurs a constant
cost of $\Cd$. This cost reflects an aversion to degrading the multimedia
stream.

In general, we will use subscripts to indicate time slots. For the packet
backlog, assume $b_0 = B$ and that the backlog at time $t$ is $b_t \in \{0, 1,
2, \hdots, B\}$ for all $t \geq 1$. The residual deadline at time $t$ is $d_t
\in \{1, 2, \hdots, D\}$. The power level at time $t$ (for now chosen according
to some arbitrary rule) is $p_t \in \Pcal$ and the interference level at time
$t$ is $i_t$. Let $S_t$ be a binary random variable representing the channel so
that $\Pbb(S_t = 0) = 1 - s(p_t, i_t)$ and $\Pbb(S_t = 1) = s(p_t, i_t)$. Using
$S_t$, we can write the cost in each time slot as follows:
\begin{equation}
  g(b_t, d_t, i_t, p_t)
  = \Ind{b_t > 0}\left(
    \Cp(p_t) + \Cb(b_t) + \Ind{S_t = 0}\Ind{d_t = 1}\Cd
  \right)
\end{equation}
The dynamics and costs are summarized in Figure~\ref{tab:model}. Recall that
the interference fluctuates independently of all other aspects of the model and
therefore $i_{t+1}$ is not influenced by $S_t$, $b_t$, or $d_t$.
\begin{table}
  \begin{center}
    \begin{tabular}{|l|c|c|}
      \hline
      & Stage Cost & State Transition\\
      \hline
      \hline
      $S_t = 1$, $b_t > 0$ & $\Cp(p_t) + \Cb(b_t)$ 
      & $(b_{t + 1}, d_{t+1}, i_{t+1}) = (b_t - 1, D, i_{t+1})$\\
      \hline
      $S_t = 0$, $d_t > 1, b_t > 0$ & $\Cp(p_t) + \Cb(b_t)$ 
      & $(b_{t + 1}, d_{t+1}, i_{t+1}) = (b_t , d_t - 1, i_{t+1})$\\
      \hline
      $S_t = 0$, $d_t = 1, b_t > 0,$ & $\Cp(p_t) + \Cb(b_t) + \Cd$
      & $(b_{t + 1}, d_{t+1}, i_{t+1}) = (b_t - 1, D, i_{t+1})$\\
      \hline
      $b_t = 0$ & 0 & Terminal State\\
      \hline
    \end{tabular}
  \end{center}
  \caption{ A table summarizing the costs and dynamics of the
    model. $S_t$ is a binary variable indicating whether the
    transmission in time slot $t$ is successful (1 if successful, 0
    otherwise). $\Cp(\cdot)$ and $\Cb(\cdot)$ are non-decreasing costs
    on the power and backlog respectively. $\Cd$ is a constant cost
    incurred when a deadline is violated. Note that when $b_t = 0$,
    the transmission is complete.\label{tab:model}}
\end{table}

There are some modeling aspects that we have chosen to exclude. For
example, we have not included a packet arrival process because we are
primarily interested in the transmission problem and hence focus on
the media packets that are already present at the transmitter. This
also reflects the initial motivation of a device transmitting a cached
video to another nearby user -- the video is already completely stored
and there are no arrivals in this case.\footnote{A Markovian arrival
  process could easily be incorporated into our formulation, but this
  would substantially complicate the mathematics in
  Section~\ref{ssec:proofs} without adding significant insight. In the
  case of arrivals, our control schemes could be applied by
  instantaneously sampling the current buffer level and responding
  based on the analysis and results presented here. We take this
  approach in the simulations presented in
  Section~\ref{sec:performance}.}

Another potential modeling aspect would be non-uniform deadlines. In
previous work \cite{Master_HOL_2014}, we allowed for the deadlines to
be a function of the backlog so that $D = D(b)$. The following
mathematical results still hold, but it is not clear that there is any
practical benefit to having backlog adaptive deadlines. For simplicity
of exposition, we assume that every packet has the same HOL deadline.

An alternative model could include deadlines for every job rather than
just for the HOL packet. Tracking all packet deadlines would cause the
state space to explode and the ``curse of dimensionality''
\cite{PutermanMDP} would yield such a problem numerically
intractable. This issue is discussed more in Section
\ref{ssec:drawbacks}. Furthermore, the HOL deadline allows us to
control inter-packet jitter and leads to more regularity in the
received packet stream. For applications like multimedia streaming,
this will lead to fewer video freezes and a more consistent user
experience at the receiver.

We should note that wireless video streaming involves many technical
issues such as error correction coding across multiple packets. For
example, packet level forward error correction was experimentally
investigated by Alay et al. \cite{Alay_FEC_2009}. They designed an
application layer mechanism to strategically insert parity packets,
thus allowing up to 50\% packet loss of video streams over wireless
channels with virtually no loss of quality. Hence, we can assume that
the receiver can recover the information in the packets that are
dropped due to the HOL deadlines. With this experimental work in mind,
our model is built to spotlight the theoretical trade-off between
transmitter power, packet drop rate (due to missed deadlines), and
backlog, which is key in these systems.

\subsection{Optimal Control Formulation\label{ssec:control}}
Given the Markovian nature of our model, we can now pose the power control
problem as a Markov Decision Process (MDP) \cite{PutermanMDP}. At time $t \in
\{0, 1, 2, \hdots\}$, our state is given by $x_t = (b_t, d_t, i_t)$. In each
time slot, a control policy uses the state to choose a power level for
transmission. Mathematically, our set of admissible policies is
\begin{equation}
  \Pi = 
  \{\pi : \{1, \hdots, B\} \times \{1, \hdots, D\} \times \{1, \hdots, I\} 
  \rightarrow \Pcal\}.
\end{equation}
We want to minimize the total expected cost. Since there is no arrival process,
the state $(0, d, i)$ is a terminal state with zero cost for any $(d, i)$. It
can take no more than $BD$ time slots to reach the class of terminal
states. Therefore, the optimal expected cost-to-go $\Jcal(b, d, i)$ is
well-defined:
\begin{equation}
  \Jcal(b, d, i) 
  = \min_{\pi \in \Pi} 
  \Ebb\left[
    \left. \sum_{t = 0}^\infty g(b_t, d_t, i_t, \pi(b_t, d_t, i_t)) \right| 
    (b_0, d_0, i_0) = (b, d, i)
  \right]
  \label{eq:Jcal_def}
\end{equation}
Note that in (\ref{eq:Jcal_def}), the minimization is implicitly
subject to the dynamics from Sections~\ref{ssec:channel} and
\ref{ssec:dynamics}. The dynamics along with the costs give us the
following Bellman equation:
\begin{align}
  \begin{split}
    \Jcal(b, d, i) 
    &= \min_{p \in \Pcal}
    \Bigg\{
      \Cb(b) + \Cp(p) \\
      &\hspace{-10mm}+\sum_{j} P_{ij}
      \Big[s(p, i)\Jcal(b-1, D, j)
      + (1 - s(p, i))(\Ind{d=1}(\Cd + \Jcal(b-1, D, j))
          + \Ind{d > 1}\Jcal(b, d-1, j))\Big]\Bigg\}            
  \end{split}
  \label{eq:bellman}
\end{align}
For each $(b, d, i)$, the minimizer is not necessarily unique so there may be
multiple optimal policies. We refer to \emph{the} optimal policy $\mu$ as the
policy defined by the following minimization:
\begin{align}
  \begin{split}
    \mu(b, d, i) &= \min \argmin_{p \in \Pcal}
    \Bigg\{
      \Cb(b) + \Cp(p) \\
      &\hspace{-10mm}+\sum_{j} P_{ij}
      \Big[s(p, i)\Jcal(b-1, D, j)
      + (1 - s(p, i))(\Ind{d=1}(\Cd + \Jcal(b-1, D, j))
          + \Ind{d > 1}\Jcal(b, d-1, j))\Big]\Bigg\}            
  \end{split}
  \label{eq:bellman_mu}
\end{align}
The $\argmin$ operator returns the set of minimizers and $\mu$ is defined by
taking the smallest minimizer. Therefore, $\mu$ is the optimal policy which uses
the smallest amount of power.

\subsection{Practical Drawbacks of Optimal Control\label{ssec:drawbacks}}
In principle, given the model, we can solve (\ref{eq:bellman}) and
(\ref{eq:bellman_mu}) with standard algorithms like value iteration
and policy iteration \cite{PutermanMDP} offline and then store $\mu$
as a look-up table for use online. However, there are some practical
limitations to this approach. The algorithmic complexity of computing
$\mu$ grows exponentially with the dimension of the state-space so it
can become difficult to compute the optimal power control for large
numbers of packets and large numbers of interference states. In
dynamic programming, this is known as the ``curse of dimensionality''
\cite{PutermanMDP}. If the system parameters change,
(\ref{eq:bellman}) and (\ref{eq:bellman_mu}) would have to be solved
again online, which would not be tractable. In addition, naively
storing $\mu$ would require $O(BDI)$ memory which would be prohibitive
for large packet buffers.

There are also problems with estimating a model in the first
place. Although finite-state Markov interference patterns are a useful
model \cite{Wang_FSMC_1995}, it can be difficult to estimate the
transition matrix $P$. First, there is the question of choosing
quantization levels. Because of the algorithmic complexity mentioned
above, there is a trade-off between model fidelity and ease of
computation. In addition, as we increase $I$, the ``stochastic
complexity'' \cite{Barron_MDL_1998} of estimating the model
increases. In brief, the number of parameters to be estimated is
$O(I^2)$ and the number of observations needed for reliable estimation
grows exponentially with the number of parameters. Finally, note that
the finite-state Markov channel is a good stochastic model for
interference fluctuating over time, but it does not necessarily
capture the notion of fluctuations over space. In particular,
interference patterns can be very different in different places which
would require estimating new models as mobile users move to new
locations.

\section{An Illustrative Case \label{sec:illustrative}}
Because of the practical limitations of the general optimal control
approach, we instead consider a special case with fixed
interference. This corresponds to taking $I = 1$. The goal is to
mathematically understand the structure of the special case and use
this understanding to overcome the aforementioned complexity issues in
the general case.  We continue to use $\Jcal$ to represent the
expected cost-to-go and $\mu$ to represent the optimal policy, but we
no longer include the third argument because the interference is
constant. We continue to keep interference as the second argument to
$s(\cdot, \cdot)$. The new Bellman equation is given below:
\begin{align}
  \begin{split}
    \Jcal(b, d) = \min_{p \in \Pcal}
    \Big\{
      &\Cb(b) + \Cp(p)\\
      &+s(p, i)\Jcal(b-1, D)
      + (1 - s(p, i))(\Ind{d=1}(\Cd + \Jcal(b-1, D))
          + \Ind{d > 1}\Jcal(b, d-1))\Big\}            
  \end{split}
  \label{eq:simple_bellman}
\end{align}

First we numerically explore the optimal power control of this special case. We
demonstrate and explain the salient features of the optimal policy. Then we move
on to prove theoretical properties of the optimal policy. The theoretical
results in this section will be used to design a power control scheme which
approximates the optimal control.

\subsection{Numerical Results\label{ssec:numerical}}
In this section we fix the following parameters while varying $\Cd$: $\Pcal =
\{2, 4, 6\}$, $s(p, i) = 1 - \exp(-p/2)$, $\Cb(b) = b$, $\Cp(p) = p$, $B = 20$,
$D = 5$. Varying $\Cd$ corresponds to varying the aversion to dropping a packet
relative to the other costs. A low value of $\Cd$ corresponds to accepting a low
quality stream while a high value of $\Cd$ corresponds to requiring a high
fidelity stream. These parameters do not reflect a particular physical system
but are intended to illustrate the features of $\mu$.

In Figure~\ref{fig:policies}, we plot $\mu$ for $\Cd \in \{1, 10, 100\}$. For
each $(b, d) \in \{1, 2, \hdots, B\} \times \{1, 2, \hdots, D\}$, we plot a
point indicating $\mu(b, d) \in \Pcal$. We first note that for all values of
$\Cd$ and each $d$, $b \mapsto \mu(b, d)$ is non-decreasing. This is expected
because the backlog pressure incentivizes the transmitter to ``try harder'' to
transmit the packets.

Varying $\Cd$ does cause qualitatively different behavior of $d \mapsto \mu(b,
d)$. When $\Cd = 1$, $d \mapsto \mu(b, d)$ is non-decreasing for every $b$. This
shows that when the cost of dropping a packet is low, the transmitter will
``give up'' as the deadline approaches. In contrast, when $\Cd = 100$, $d
\mapsto \mu(b, d)$ is non-increasing for every $b$. In this case, the cost of
dropping a packet is high so the transmitter ``tries harder'' as the deadline
approaches. When $\Cd = 10$, we see that the ``give up'' and ``try harder''
behavior depends on the value of $b$. Indeed for $\Cd = 10$, the policy adopts a
``try harder'' behavior for low values of $b$ and a ``give up'' behavior for
large values of $b$. In the following section, we analytically determine the
relationship between the costs that causes this qualitatively different
behavior.

\begin{figure}
  \centering
  \begin{subfigure}{0.5\textwidth}
    \includegraphics[width=\textwidth]{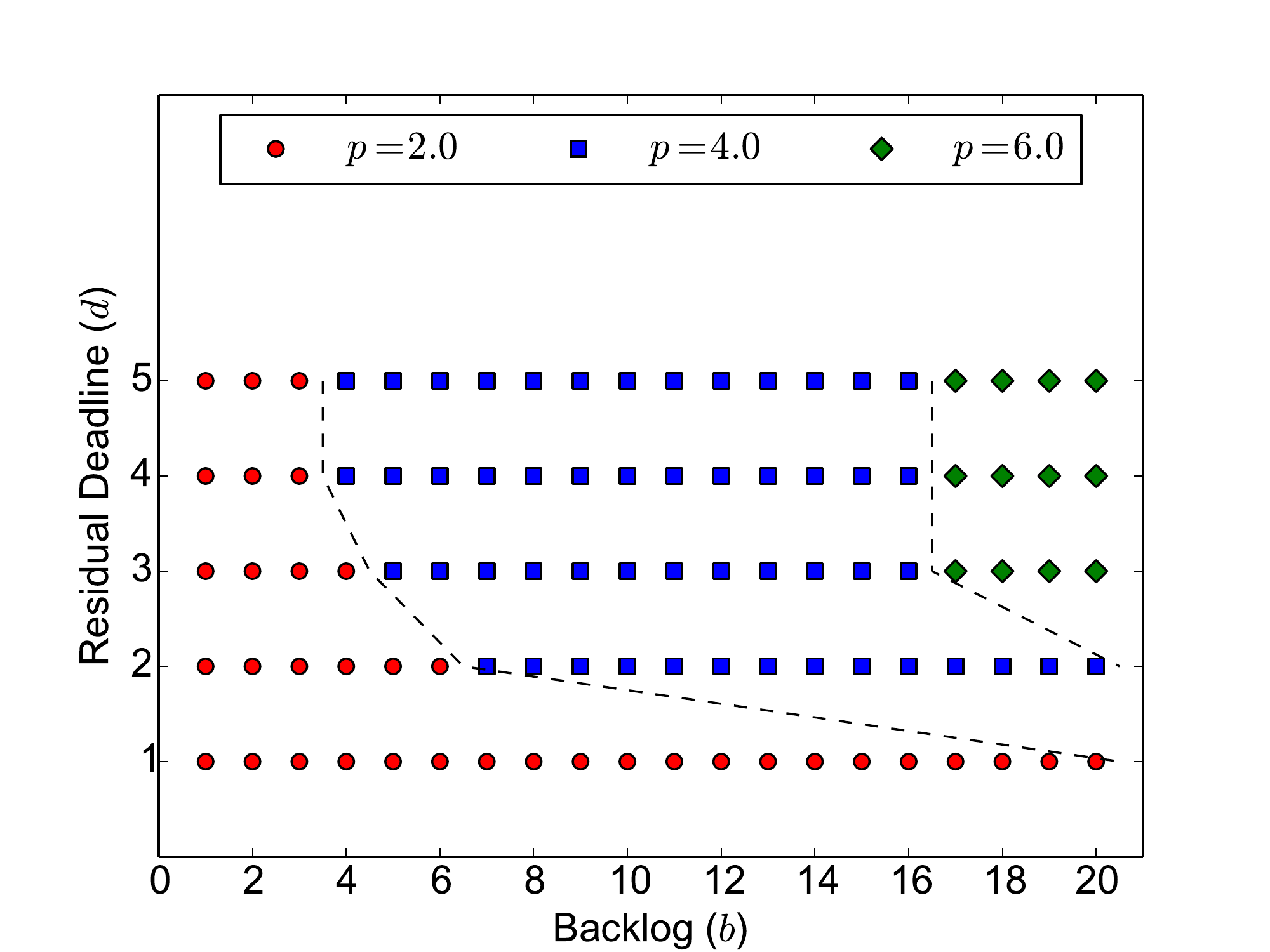}
    \caption{ $\Cd = 1$}
  \end{subfigure}%
  \begin{subfigure}{0.5\textwidth}
    \includegraphics[width=\textwidth]{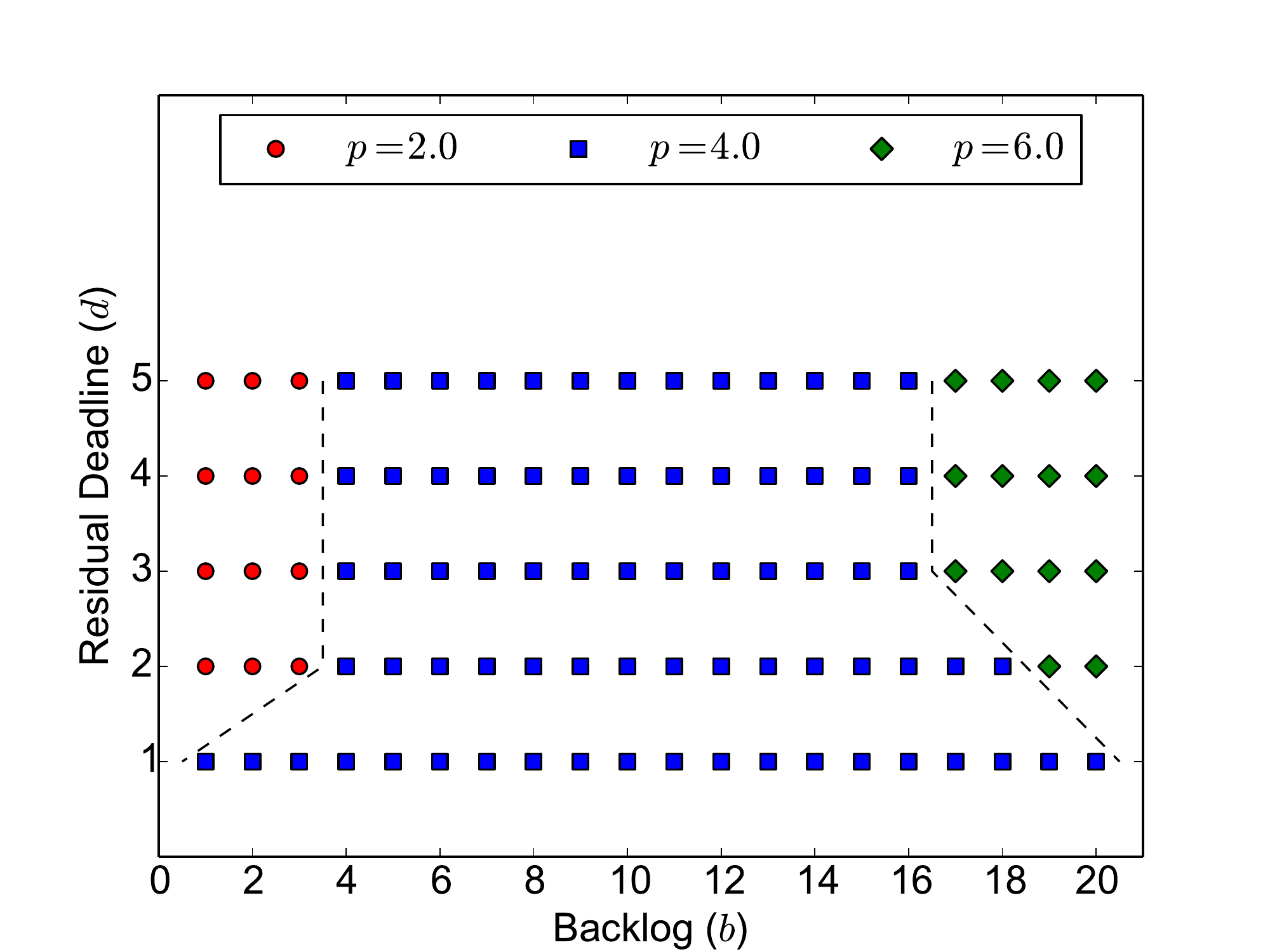}
    \caption{ $\Cd = 10$}
  \end{subfigure}\\
  \begin{subfigure}{0.5\textwidth}
    \includegraphics[width=\textwidth]{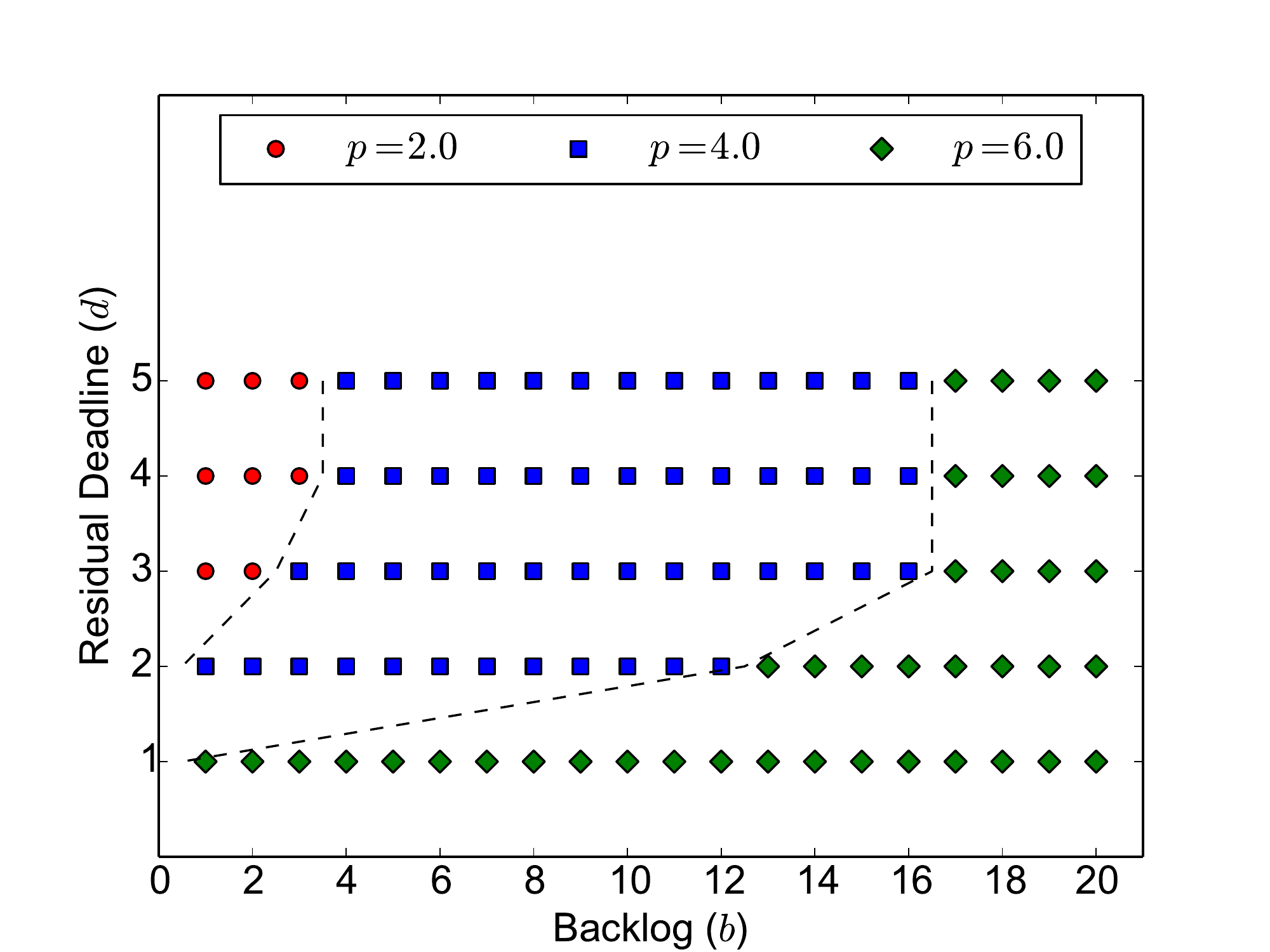}
    \caption{ $\Cd = 100$}
  \end{subfigure}
  \caption{ Example policies for the illustrative case presented in
    Section~\ref{sec:illustrative}. We take $\Pcal = \{2, 4, 6\}$,
    $s(p, i) = 1 - \exp(-p/2)$, $\Cb(b) = b$, $\Cp(p) = p$, $B = 20$,
    $D = 5$ and vary $\Cd$. For each $(b, d) \in \{1, 2, \hdots, B\}
    \times \{1, 2, \hdots, D\}$, we plot a point indicating $\mu(b, d)
    \in \Pcal$. The dashed lines indicate partitions in the state
    space that separate the different power values. For any $\Cd$ and
    any $d$, we see that $b \mapsto \mu(b, d)$ is
    non-decreasing. However, varying $\Cd$ does affect how $\mu(b, d)$
    varies with $d$. When $\Cd$ is low ($\Cd = 1$), $d \mapsto \mu(b,
    d)$ is non-decreasing for every $b$. When $\Cd$ is high ($\Cd =
    100$), $d \mapsto \mu(b, d)$ is non-increasing for every $b$. When
    $\Cd$ is moderate ($\Cd = 10$), $d \mapsto \mu(b, d)$ is
    non-decreasing for certain values of $b$ and non-increasing for
    others.    \label{fig:policies}}
\end{figure}

\subsection{Theoretical Results \label{ssec:proofs}}
The main results can be summarized informally as follows:
\begin{itemize}
\item For any $s(\cdot, \cdot)$, $b \mapsto \mu(b, d)$ is
  non-decreasing. 
\item For any $s(\cdot, \cdot)$, if $\Cb(b) + \min_{p \in
    \Pcal}\{\Cp(p) - s(p, i)\Cd\} \geq 0$ then $d \mapsto \mu(b, d)$
  is non-decreasing.
\item For any $s(\cdot, \cdot)$, if $\Cb(b) + \min_{p \in
    \Pcal}\{\Cp(p) - s(p, i)\Cd\} \leq 0$ then $d \mapsto \mu(b, d)$
  is non-increasing.
\item When $p \mapsto s(p, i)$ is concave or sigmoidal and
  $\Cp(\cdot)$ is linear, $b \mapsto \mu(b, d)$ is well-approximated
  by a sublinear function of $\Cb(b)$. Roughly speaking, the optimal
  transmitter power grows sublinearly with the backlog cost.
\end{itemize}

To rigorously prove these theoretical results, we need to apply the
following theorem (a version of what is known as Topkis's Theorem
\cite{Topkis_Minimizing_1978}) which is standard in the stochastic
dynamic programming literature, e.g. \cite{PutermanMDP}.
\begin{theorem}
  Take some finite $\Ycal \subseteq \Rbb$. We say $f:\Rbb \times
  \Ycal \rightarrow \Rbb$ is submodular if for any $x^- \leq x^+$
  and $y^- \leq y^+$, the following inequality holds:
  \begin{equation}
    f(x^+, y^+) + f(x^-, y^-) \leq f(x^+, y^-) + f(x^-, y^+)
    \label{eq:submodular}
  \end{equation}
  Let $g(x) = \min\argmin_{y \in \Ycal} f(x, y)$. If $f$ is
  submodular then $g$ is non-decreasing.
  \label{thrm:topkis}
\end{theorem}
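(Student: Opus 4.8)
The plan is to prove Theorem~\ref{thrm:topkis} by contradiction, exploiting the definition of $g$ as the \emph{smallest} minimizer. First I would fix $x^- \leq x^+$ and set $y^- = g(x^-)$ and $y^+ = g(x^+)$, so that $y^-$ minimizes $f(x^-, \cdot)$ over $\Ycal$ and $y^+$ minimizes $f(x^+, \cdot)$ over $\Ycal$. The goal is to show $y^- \leq y^+$. Suppose for contradiction that $y^+ < y^-$. I would then apply the submodularity inequality~\eqref{eq:submodular} with the roles chosen so that the ``high-low / low-high'' cross terms become $f(x^+, y^-)$ and $f(x^-, y^+)$; concretely, since $y^+ \leq y^-$ (strictly) and $x^- \leq x^+$, submodularity with the pair $(y^+, y^-)$ playing the role of $(y^-, y^+)$ gives
\begin{equation}
  f(x^+, y^-) + f(x^-, y^+) \leq f(x^+, y^+) + f(x^-, y^-).
\end{equation}

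Next I would combine this with the two optimality facts. Because $y^+$ is a minimizer of $f(x^+, \cdot)$, we have $f(x^+, y^+) \leq f(x^+, y^-)$; because $y^-$ is a minimizer of $f(x^-, \cdot)$, we have $f(x^-, y^-) \leq f(x^-, y^+)$. Adding these and comparing with the submodularity inequality above forces all the inequalities to be equalities; in particular $f(x^+, y^-) = f(x^+, y^+)$, so that $y^-$ is \emph{also} a minimizer of $f(x^+, \cdot)$. This is the crux of the argument: we have produced a minimizer $y^-$ of $f(x^+,\cdot)$ with $y^- > y^+$, which contradicts the definition of $y^+ = g(x^+)$ as the \emph{smallest} element of $\argmin_{y \in \Ycal} f(x^+, y)$. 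Hence $y^+ \geq y^-$, i.e. $g(x^+) \geq g(x^-)$, establishing monotonicity.

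The main obstacle, such as it is, lies in handling the tie-breaking carefully: submodularity alone only yields monotonicity of the \emph{set-valued} argmin in the sense of a lattice/increasing-selection statement, and a strict conclusion about the minimal selection $g$ requires using the ``smallest minimizer'' convention precisely where the contradiction is drawn. I would therefore be careful to track which inequalities are strict: the contradiction is clean because $y^-$ being an alternative minimizer of $f(x^+,\cdot)$ strictly exceeds the minimal minimizer $y^+$. Finiteness of $\Ycal$ guarantees the $\argmin$ sets are nonempty and attain a smallest element, so $g$ is well-defined throughout; no compactness or continuity hypotheses are needed beyond what finiteness provides. The argument is entirely combinatorial and does not use any structure of $\Rbb$ beyond the order on the first coordinate.
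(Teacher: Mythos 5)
Your setup is fine, but the final step --- the actual contradiction --- is invalid. (Note, incidentally, that the paper itself does not prove Theorem~\ref{thrm:topkis}; it cites it as standard from Topkis and Puterman, so your argument must stand on its own.) You correctly apply submodularity with $(y^+, y^-)$ as the ordered pair, and you correctly deduce from the two optimality inequalities $f(x^+, y^+) \leq f(x^+, y^-)$ and $f(x^-, y^-) \leq f(x^-, y^+)$ that all inequalities collapse to equalities. But the equality you then invoke, $f(x^+, y^-) = f(x^+, y^+)$, yields no contradiction: it only says that $y^-$ is a \emph{tied} minimizer of $f(x^+, \cdot)$ lying \emph{above} $y^+$. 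The convention $g(x^+) = \min\argmin_{y \in \Ycal} f(x^+, y)$ forbids minimizers strictly \emph{below} $g(x^+)$, not above it, so producing a minimizer $y^- > y^+$ is perfectly consistent with $y^+$ being the smallest minimizer. As written, the proof does not close.

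The repair is one line and uses the \emph{other} forced equality: $f(x^-, y^+) = f(x^-, y^-)$ shows that $y^+$ is a minimizer of $f(x^-, \cdot)$ with $y^+ < y^- = g(x^-)$, contradicting the minimality of $g(x^-)$. Equivalently, and more directly: since $y^+ < y^- = \min\argmin_{y \in \Ycal} f(x^-, y)$, the point $y^+$ cannot be a minimizer at $x^-$, so $f(x^-, y^-) < f(x^-, y^+)$ holds \emph{strictly} from the outset; adding this to $f(x^+, y^+) \leq f(x^+, y^-)$ gives
\begin{equation*}
  f(x^+, y^+) + f(x^-, y^-) < f(x^+, y^-) + f(x^-, y^+),
\end{equation*}
which directly contradicts the submodularity inequality you derived, with no case analysis about ties needed. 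The moral is that the tie-breaking convention bites at $x^-$, not at $x^+$: the strictness your contradiction needs is available only on the $x^-$ side. Your peripheral remarks are correct --- finiteness of $\Ycal$ makes $g$ well-defined, and nothing about $\Rbb$ is used beyond the order --- but the crux step must be rerouted as above.
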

Theorem~\ref{thrm:topkis} and its generalizations have been used to
prove monotonicity properties of MDPs in the queueing and control
literature (e.g. \cite{Serfozo_Monotone_1976},
\cite{Veatch_Monotone_1992}, and \cite{Dragut_Lattice_2010}). However,
the HOL deadlines in our model prevent us from directly applying these
previous results. As seen in Section~\ref{ssec:numerical}, the
monotonicity of the optimal control policy varies with the backlog. As
a result, we need to approach this problem from first principles. We
begin with some definitions and notation which will allow us to
reformulate the Bellman equation in (\ref{eq:simple_bellman}).
\begin{definition}
  We define the following auxiliary quantities. For each $b \geq 1$, 
  $$\delta(b,1) = \Cb(b) + \min_{p \in \Pcal}\left\{\Cp(p)- s(p, i)\Cd\right\}.$$
  For $d > 1$, $\delta(b, d)$ is inductively defined as follows:
  \begin{align}
  \delta(b,d) 
  &= \Cb(b) + 
  \min_{p \in \Pcal}\left\{\Cp(p) 
    - s(p, i)\left(\Cd + \sum_{k=1}^{d-1}\delta(b,k)\right)\right\}
  \end{align}
  In addition, we define $ \sigma(b,d) = \sum_{k=1}^d \delta(b,k)$.  For each of
  these definitions, whenever the upper limit on a sum is less than the lower
  limit, the sum is equal to zero.

  For $b \geq 1$ we define the operator $T_b: \Rbb \rightarrow \Rbb$ as
  follows:
  $$T_b(x) = x + \Cb(b) + \min_{p \in \Pcal}\left\{\Cp(p) - s(p, i)(\Cd + x)\right\}$$
\end{definition}

Note that $T_b(0) = \Cb(b) + \min_{p \in \Pcal}\left\{\Cp(p) - s(p,
  i)\Cd\right\}$. Therefore, we will be using the sign of $T_b(0)$ to tell us
whether $\mu(b, \cdot)$ is increasing or decreasing. This idea is further
elucidated in the following propositions.

\begin{proposition}
  \label{prop:J_mu}
  For $b \geq 1$, the Bellman equation in (\ref{eq:simple_bellman})
  can be reformulated as follows:
  \begin{align}
    \begin{split}
      \Jcal(b,d)
      &= \left\{
        \begin{array}{ll}
          \Cd + \Jcal(b-1, D) + \delta(b,1) &, d = 1\\
          \Jcal(b, d-1) + \delta(b,d) &, d > 1
        \end{array}
      \right.
    \end{split}
  \end{align}
  Consequently, the optimal policy can we written as
  \begin{align}
    \mu(b,d) &= \min \argmin_{p \in \Pcal}\left\{\Cp(p) - s(p, i)(\Cd
      + \sigma(b,d-1))\right\}.
  \end{align}
  Also note that since every cost is non-negative, $\Jcal(b,d) \geq 0$
  for every $(b,d)$.
\end{proposition}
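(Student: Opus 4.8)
The plan is to prove the reformulated Bellman recursion and the policy characterization simultaneously by induction on the residual deadline $d$, holding the backlog $b$ fixed and treating $\Jcal(b-1, D)$ as a known constant (it carries a strictly smaller backlog, so no circularity arises). The observation that makes the computation collapse is that in (\ref{eq:simple_bellman}) the term $s(p,i)\Jcal(b-1,D)$ combines with the $(1-s(p,i))$-weighted continuation cost so that, after expanding, every occurrence of a cost-to-go that is multiplied by $s(p,i)$ cancels against a matching $+s(p,i)(\cdots)$, leaving a minimization of the clean form $\min_{p\in\Pcal}\{\Cp(p) - s(p,i)\,C\}$ for a constant $C$ independent of $p$.

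For the base case $d=1$, I would set $d=1$ in (\ref{eq:simple_bellman}), so that $\Ind{d=1}=1$ and $\Ind{d>1}=0$, and expand $(1-s(p,i))(\Cd+\Jcal(b-1,D))$. The $s(p,i)\Jcal(b-1,D)$ terms cancel, and pulling the $p$-independent terms out of the minimum leaves exactly $\Cd + \Jcal(b-1,D) + \big[\Cb(b) + \min_{p}\{\Cp(p)-s(p,i)\Cd\}\big] = \Cd + \Jcal(b-1,D) + \delta(b,1)$, matching the stated $d=1$ case.

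The crucial move is to strengthen the induction hypothesis from the piecewise recursion to the \emph{telescoped} identity $\Jcal(b,d) = \Cd + \Jcal(b-1,D) + \sigma(b,d)$; this closed form, rather than the one-step recursion, is what lets the induction close. Assuming it for $d-1$, I substitute $\Jcal(b,d-1) = \Cd + \Jcal(b-1,D) + \sigma(b,d-1)$ into the $d>1$ branch of (\ref{eq:simple_bellman}). Writing $A = \Jcal(b-1,D)$ and $C = \Cd + \sigma(b,d-1)$, the integrand becomes $\Cb(b) + \Cp(p) + s(p,i)A + (1-s(p,i))(A+C)$; the $s(p,i)A$ terms cancel, $A+C$ leaves the minimum, and what remains is $A + C + \Cb(b) + \min_p\{\Cp(p) - s(p,i)C\}$. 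Recognizing $\Cb(b)+\min_p\{\Cp(p)-s(p,i)(\Cd+\sigma(b,d-1))\} = \delta(b,d)$ from the definition and using $\sigma(b,d-1)+\delta(b,d)=\sigma(b,d)$ yields $\Jcal(b,d) = \Cd + \Jcal(b-1,D) + \sigma(b,d)$, completing the induction. The piecewise recursion then follows by differencing consecutive deadlines, since $\Jcal(b,d)-\Jcal(b,d-1) = \sigma(b,d)-\sigma(b,d-1) = \delta(b,d)$, and the policy formula drops out for free because each minimization above has the form $\argmin_p\{\Cp(p) - s(p,i)(\Cd+\sigma(b,d-1))\}$ (with the convention $\sigma(b,0)=0$ covering $d=1$); taking the smallest minimizer gives the claimed $\mu(b,d)$. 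Finally, $\Jcal(b,d)\geq 0$ is immediate from (\ref{eq:Jcal_def}), since $g$ is a sum of non-negative costs.

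The step I expect to be the main obstacle is \emph{choosing} the right induction hypothesis: inducting directly on the two-line piecewise recursion does not obviously close, because the $d>1$ branch of (\ref{eq:simple_bellman}) references $\Jcal(b,d-1)$ rather than $\Jcal(b-1,D)$, so one must first recognize that the natural invariant is the telescoped form $\Jcal(b,d) = \Cd + \Jcal(b-1,D) + \sigma(b,d)$. Everything else is bookkeeping around the cancellation of the $s(p,i)$-weighted cost-to-go terms.
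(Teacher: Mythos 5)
Your proposal is correct and follows essentially the same route as the paper: an induction on $d$ that exploits the cancellation of the $s(p,i)\Jcal(b-1,D)$ terms and reads the policy off the residual minimization $\min_{p\in\Pcal}\{\Cp(p) - s(p,i)(\Cd + \sigma(b,d-1))\}$. The only difference is presentational: you run ordinary induction on the pre-telescoped invariant $\Jcal(b,d) = \Cd + \Jcal(b-1,D) + \sigma(b,d)$ and recover the piecewise recursion by differencing, whereas the paper runs strong induction on the one-step recursion and reconstructs that same identity inside the inductive step via an explicit telescoping sum in (\ref{eq:telescoping}), an equivalent repackaging of the same computation.
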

\begin{proof}
  We apply the principle of strong mathematical induction.  For $d =
  1$, this is just a reordering of the terms in the Bellman equation.
  For $d = 2$, we have the following:
  \begin{align}
    \Jcal(b,2)
    &= \min_{p \in \Pcal}\{\Cp(p) + \Cb(b) 
    + s(p, i)\Jcal(b-1, D) + (1-s(p, i))\Jcal(b,1)\}\\
    &= \Jcal(b,1) + \Cb(b)
    + \min_{p \in \Pcal}\{\Cp(p) - s(p, i)(\Jcal(b,1) - \Jcal(b-1, D))\}\\
    &= \Jcal(b,1) + \Cb(b) + 
    \min_{p \in \Pcal}\{\Cp(p) - s(p, i)(\Cd + \delta(b,1))\}\\
    &= \Jcal(b,1) + \delta(b,2)
  \end{align}
  Now assume the proposition is true for all $d - 1, d-2, d-3, \hdots, 2, 1$. 
  \begin{align}
    &\Jcal(b,d)\nonumber\\
    &= \min_{p \in \Pcal}\{\Cp(p) + \Cb(b) 
    + s(p, i)\Jcal(b-1, D) + (1-s(p, i))\Jcal(b,d-1)\}\\
    &= \Jcal(b,d-1) + \Cb(b)
    +\min_{p \in \Pcal}\{\Cp(p) - s(p, i)(\Jcal(b,d-1) - \Jcal(b-1, D))\}\\
    &= \Jcal(b,d-1) + \Cb(b) 
    +\min_{p \in \Pcal}\left\{\Cp(p)
      - s(p, i)\left(\sum_{k=2}^{d-1}(\Jcal(b, k) - \Jcal(b, k-1))
        + \Jcal(b,1) - \Jcal(b-1, D)\right)\right\}\label{eq:telescoping}\\
    &= \Jcal(b,d-1) + \Cb(b) 
    +\min_{p \in \Pcal}\left\{\Cp(p)
      - s(p, i)\left(\sum_{k=2}^{d-1}\delta(b,k) 
        + \Jcal(b,1) - \Jcal(b-1, D)\right)\right\}\label{eq:induction1}\\
    &= \Jcal(b,d-1) + \Cb(b) + 
    \min_{p \in \Pcal}\left\{\Cp(p)
      - s(p, i)\left( \Cd + \sum_{k=1}^{d-1}\delta(b,k) \right)\right\}\label{eq:induction2}\\
    &= \Jcal(b,d-1) + \delta(b,d)
  \end{align}
  In (\ref{eq:telescoping}), we replace $\Jcal(b, d-1)$ with an
  equivalent telescoping sum. This allows us to apply the induction
  hypothesis in (\ref{eq:induction1}) and (\ref{eq:induction2}).  

  This reformulation of the Bellman equation allows us to just ignore
  the terms that don't involve $p$ to see that
  $$
  \mu(b,d) = \min\argmin_{p \in \Pcal} \left\{ \Cp(p) - s(p, i)\left(\Cd +
      \sum_{k=1}^{d-1}\delta(b,k)\right) \right\} .$$ 
  Replacing the sum with $\sigma(b,d-1)$ completes the proof.
\end{proof}

This gives us a semi-analytic expression for $\mu$. In other words, we now have
an expression for $\mu$ which depends on $\sigma$ rather than on $\Jcal$. We will
use the $\delta$ terms to determine whether $\sigma(b, d)$ is increasing or
decreasing in $b$ and/or $d$.
\begin{proposition}
  \label{prop:Tb}
  For each $b$, $T_b(\cdot)$ is non-decreasing. We can write $\sigma(b,d) =
  T_b(\sigma(b,d-1))$ where $\sigma(b,0) = 0$.
  For $d \geq 1$ and $b \geq 1$ we have the following conditions:
  \begin{itemize}
  \item For each $d$, $\sigma(\cdot, d)$ is non-decreasing.
  \item If $T_b(0) \geq 0$ then $\sigma(b,\cdot)$ is non-negative and
    non-decreasing.
  \item If $T_b(0) \leq 0$ then $\sigma(b,\cdot)$ is non-positive and
    non-increasing.
  \end{itemize}
\end{proposition}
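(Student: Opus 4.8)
The plan is to prove the five assertions in roughly the stated order, since each relies on the preceding ones; the engine of the whole argument is the monotonicity of $T_b$, which I would establish first and then feed into a monotone-iteration scheme. Writing $h_p(x) = \Cp(p) - s(p,i)(\Cd + x)$, each $h_p$ is affine in $x$ with slope $-s(p,i)$, and $T_b(x) = x + \Cb(b) + \min_{p \in \Pcal} h_p(x)$. To show $T_b$ is non-decreasing, I would take $x^- \leq x^+$, let $p^+$ attain the minimum at $x^+$, and use $\min_p h_p(x^-) \leq h_{p^+}(x^-) = h_{p^+}(x^+) + s(p^+, i)(x^+ - x^-)$ to obtain $\min_p h_p(x^+) - \min_p h_p(x^-) \geq -s(p^+, i)(x^+ - x^-)$. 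Adding the leading $x$ then gives $T_b(x^+) - T_b(x^-) \geq (1 - s(p^+, i))(x^+ - x^-) \geq 0$, where the last step is exactly where property (ii), $s(p,i) \in [0,1]$, is used; finiteness of $\Pcal$ guarantees the minimizer $p^+$ exists.

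The recursion $\sigma(b,d) = T_b(\sigma(b, d-1))$ with $\sigma(b,0) = 0$ is then immediate from the definitions: substituting $\sigma(b,d-1) = \sum_{k=1}^{d-1}\delta(b,k)$ into the formula for $\delta(b,d)$ and adding $\sigma(b,d-1)$ reproduces $T_b(\sigma(b,d-1))$, while the $d=1$ case matches $\sigma(b,1) = \delta(b,1) = T_b(0)$. For monotonicity of $\sigma(\cdot, d)$ in $b$ I would induct on $d$, the base case $\sigma(\cdot,0) \equiv 0$ being trivial. Since $\Cb$ is non-decreasing and $b$ enters $T_b$ only through $\Cb(b)$, we have $T_{b^-}(x) \leq T_{b^+}(x)$ pointwise whenever $b^- \leq b^+$; combining this with the inductive hypothesis $\sigma(b^-, d-1) \leq \sigma(b^+, d-1)$ and the already-established monotonicity of $T_{b^+}$ yields $\sigma(b^-, d) = T_{b^-}(\sigma(b^-, d-1)) \leq T_{b^+}(\sigma(b^-, d-1)) \leq T_{b^+}(\sigma(b^+, d-1)) = \sigma(b^+, d)$.

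For the two sign-of-$T_b(0)$ claims I would fix $b$ and run a standard monotone-iteration argument. If $T_b(0) \geq 0$, then $\sigma(b,1) = T_b(0) \geq 0 = \sigma(b,0)$, and applying the non-decreasing map $T_b$ inductively shows $\sigma(b,d) \geq \sigma(b,d-1)$ for every $d$, so $\sigma(b,\cdot)$ is non-decreasing and, being bounded below by $\sigma(b,0) = 0$, non-negative. The case $T_b(0) \leq 0$ is entirely symmetric, giving a non-increasing, non-positive sequence. I expect the only genuinely delicate step to be the monotonicity of $T_b$: the point is that the min of the affine pieces is non-smooth, so one must bound the difference of minima via the envelope trick (evaluating the minimizer at the larger argument) rather than differentiating, and one must recall that the slope cancellation producing $1 - s(p^+,i) \geq 0$ hinges precisely on $s \leq 1$. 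Everything after that is routine induction driven by this one monotonicity fact.
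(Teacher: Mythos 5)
Your proposal is correct and follows essentially the same route as the paper: monotonicity of $T_b$ via the observation that $s(p,i)\leq 1$, the recursion $\sigma(b,d) = T_b(\sigma(b,d-1))$ read off from the definition of $\delta(b,d)$, the pointwise domination $T_{b^-}(x) \leq T_{b^+}(x)$ combined with induction for monotonicity in $b$, and monotone iteration from $\sigma(b,0)=0$ for the two sign-of-$T_b(0)$ claims. The only cosmetic difference is in the first step, where you bound the difference of minima by an envelope argument at the minimizer $p^+$, whereas the paper absorbs the leading $x$ into the minimum and notes that each map $x \mapsto \Cb(b) + (1-s(p,i))x + \Cp(p) - s(p,i)\Cd$ is non-decreasing---a one-line version of the same slope cancellation you identify.
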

\begin{proof}
  Take $x \geq y$. For any $p$, $(1 - s(p, i)) \geq 0$ so $(1 - s(p, i))x \geq (1 -
  s(p, i))y$. Adding terms to both sides preserves the inequality so
  \begin{align}
    \Cb(b) + (1-s(p, i))x + \Cp(p) - s(p, i)\Cd \geq \Cb(b) + (1-s(p, i))y + \Cp(p) -
    s(p, i)\Cd.
  \end{align} 
  Taking the minimum over $p \in \Pcal$ and using the monotonicity of minimization
  gives us that $T_b(x) \geq T_b(y)$.

  We originally defined $\sigma(b,d) = \sum_{k=1}^d \delta(b,k)$. If $d = 0$
  then the sum is vacuous so $\sigma(b,0) = 0$. For $d \geq 1$,
  \begin{align} 
    \sigma(b,d) 
    &= \sum_{k=1}^d \delta(b,k) 
    = \sum_{k=1}^{d-1} \delta(b,k) + \delta(b,d)
    = \sigma(b,d-1) + \delta(b,d)\\
    &= \sigma(b,d-1) + \Cb(b) 
    + \min_{p \in \Pcal}\left\{\Cp(p) - s(p, i)\left(\Cd +
        \sum_{k=1}^{d-1}\delta(b,k)\right)\right\}\\ 
    &= \sigma(b,d-1) + \Cb(b) 
    + \min_{p \in \Pcal}\left\{\Cp(p) - s(p, i)\left(\Cd +
        \sigma(b,d-1)\right)\right\}\\ 
    &= T_b(\sigma(b,d-1))
  \end{align}

  Since $\Cb(\cdot)$ is non-decreasing, $T_{b'}(x) \geq T_b(x)$ for
  any $b' \geq b$. Since $\sigma(b, d)$ can be found by iteratively
  applying $T_b(\cdot)$ to $0$, we have that $\sigma(b', d) \geq
  \sigma(b, d)$ for any $b' \geq b$. Therefore, $\sigma(\cdot, d)$ is
  non-decreasing for all $d$.

  Now we show how $\sigma(b, d)$ varies with $d$. Assume that $T_b(0)
  \geq 0$ and proceed by induction. $\sigma(b, 0) = 0$ so we have that
  $\sigma(b, 1) = T_b(\sigma(b, 0)) \geq \sigma(b, 0)$. If $\sigma(b,
  d) \geq \sigma(b, d-1)$ then applying $T_b$ and using the fact that
  $T_b$ is non-decreasing gives us that
  \begin{align}
    \sigma(b, d+1) = T_b(\sigma(b, d)) &\geq T_b(\sigma(b, d-1)) =\sigma(b, d)
  \end{align}
  Therefore, $\sigma(b, \cdot)$ is non-decreasing for all $d$. $\sigma(b,0) = 0$
  so $\sigma(b, d)$ is non-negative. The case of $T_b(0) \leq 0$ is analogous.
\end{proof}

% The next result gives us a characterization of the optimal cost-to-go. The
% result is interesting because the structural properties for $\Jcal$ are a bit
% more complicated than they are for $\mu$.
% \begin{proposition}
%   \label{prop:bound}
%   For $d > 1$ and $b \geq 1$ we have the following conditions:
%   \begin{itemize}
%   \item If $T_b(0) \geq 0$ then $\Jcal(b,\cdot)$ is non-decreasing.
%   \item If $T_b(0) \leq 0$ then $\Jcal(b,\cdot)$ is non-increasing.
%   \end{itemize}

%   In addition, for $d = 1$ and $b \geq 1$ we have the following conditions:
%   \begin{itemize}
%   \item $T_b(0) \geq -\Cd \implies \Jcal(b,1) \geq \Jcal(b-1, D)$ 
%   \item $T_b(0) \leq -\Cd \implies \Jcal(b,1) \leq \Jcal(b-1, D)$
%   \end{itemize}
%   In particular, if $T_b(0) \geq 0$ for all $b \geq 1$ then $\Jcal(b, d)$ is
%   increasing in both $b$ and $d$. If $T_b(0) \leq -\Cd$ for all $b \geq 1$ then
%   $\Jcal(b, d)$ is decreasing in both $b$ and $d$.
% \end{proposition}
% \begin{proof}
%   See ~\ref{app:bound}.
% \end{proof}

Now that we understand the monotonicity properties of $\sigma(b, d)$, we can
leverage these to understand the monotonicity properties of $\mu(b, d)$.
\begin{proposition}
  \label{prop:mainResult}
  The optimal policy $\mu(b,d)$ is always non-decreasing in $b$. If
  $T_b(0) \geq 0$ then $\mu(b,d)$ is non-decreasing in $d$. If $T_b(0)
  \leq 0$ then $\mu(b,d)$ is non-increasing in $d$.
\end{proposition}
\begin{proof}

  We can now leverage Theorem~\ref{thrm:topkis} for our specific problem.  From
  Proposition~\ref{prop:J_mu},
  $$\mu(b,d) = \min\argmin_{p \in \Pcal}\left\{\Cp(p) - 
    s(p, i)(\Cd + \sigma(b,d-1))\right\}.$$ 

  Let $f(x, p) = \Cp(p) - s(p, i)x$ and define $g(x)$ as in
  Theorem~\ref{thrm:topkis} with $\Ycal = \Pcal$. Then $\mu(b, d) =
  g(\Cd + \sigma(b, d-1))$. Let LHS and RHS be the left-hand side and
  right-hand side of (\ref{eq:submodular}).
  \begin{align}
    LHS &- RHS\nonumber\\
    &= \left(\Cp(p^+) - s(p^+, i)x^+ + \Cp(p^-) - s(p^-, i)x^-\right)
    -\left(\Cp(p^+) - s(p^+, i)x^- + \Cp(p^-) - s(p^-, i)x^+\right)\\
    &= (s(p^+, i) - s(p^-, i))(x^- - x^+)
  \end{align}
  By assumption, $(s(p^+, i) - s(p^-, i)) \geq 0$ while $(x^- - x^+) \leq 0$ so
  $LHS \leq RHS$. Therefore, $f$ is submodular, $g$ is non-decreasing, and
  $\mu(b,d)$ is a non-decreasing function of $\sigma(b, d-1)$. 

  $\sigma(b, d)$ is non-decreasing in $b$ so $\mu(b,d)$ is non-decreasing in
  $b$. If $T_b(0) \geq 0$ then $\sigma(b, d)$ is non-decreasing in $d$ so
  $\mu(b,d)$ is non-decreasing in $d$. If $T_b(0) \leq 0$ then $\sigma(b,d)$ is
  non-increasing in $d$ so $\mu(b,d)$ is non-increasing in $d$.
\end{proof}

In order to understand more precisely how $\mu(b, d)$ increases with $b$, we
need to understand more precisely how $\sigma(b, d)$ increases with $b$. The
following result gives us that understanding.
\begin{proposition}
  \label{prop:OCb}
  Let $m = \min_{p \in \Pcal}\{ \Cp(p) - s(p, i)\Cd\}$ and $M = \max_{p \in
    \Pcal}\{ \Cp(p) - s(p, i)\Cd\}$. Let $s_{min} = s(\min\Pcal, i)$ and
  $s_{max} = s(\max\Pcal, i)$.  For any fixed $b$ and $d$, we have the following
  cases:
  \begin{itemize}
  \item $T_b(0) > 0$:
    \begin{equation}
      \left(\Cb(b) + m\right)\sum_{k=0}^{d-1} (1 - s_{max})^k
      \leq \sigma(b, d) \leq
      \left(\Cb(b) + M\right)\sum_{k=0}^{d-1} (1 - s_{min})^k
    \end{equation}
  \item $T_b(0) < 0$:
    \begin{equation}
      \left(\Cb(b) + m\right)\sum_{k=0}^{d-1} (1 - s_{min})^k
      \leq \sigma(b, d) \leq
      \left(\Cb(b) + M\right)\sum_{k=0}^{d-1} (1 - s_{max})^k
    \end{equation}
  \item $T_b(0) = 0$:
    \begin{equation}
      \sigma(b, d) = 0
    \end{equation}
\end{itemize}
\end{proposition}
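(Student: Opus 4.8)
The plan is to sandwich the nonlinear map $T_b$ between two affine maps and then iterate the resulting inequalities. Writing $c(p) = \Cp(p) - s(p,i)\Cd$, the operator becomes $T_b(x) = x + \Cb(b) + \min_{p \in \Pcal}\{c(p) - s(p,i)x\}$, where $m = \min_p c(p)$, $M = \max_p c(p)$, and $s_{min} \le s(p,i) \le s_{max}$ for every $p \in \Pcal$ by monotonicity of $s$ in $p$. The first step is to bound the inner minimum term by term. For $x \ge 0$ one has $c(p) - s(p,i)x \ge m - s_{max}x$ and $c(p) - s(p,i)x \le M - s_{min}x$, hence $(1-s_{max})x + (\Cb(b)+m) \le T_b(x) \le (1-s_{min})x + (\Cb(b)+M)$; for $x \le 0$ the roles of $s_{min}$ and $s_{max}$ swap (since $-s(p,i)x = s(p,i)(-x)$ with $-x \ge 0$), giving $(1-s_{min})x + (\Cb(b)+m) \le T_b(x) \le (1-s_{max})x + (\Cb(b)+M)$.

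Next I would invoke Proposition~\ref{prop:Tb} to pin down the sign of the iterates, so that the correct branch of the above bounds applies throughout the recursion $\sigma(b,d) = T_b(\sigma(b,d-1))$ with $\sigma(b,0)=0$. When $T_b(0) > 0$, Proposition~\ref{prop:Tb} gives $\sigma(b,d) \ge 0$ for all $d$, so the $x \ge 0$ bounds apply at every step; when $T_b(0) < 0$ it gives $\sigma(b,d) \le 0$, so the $x \le 0$ bounds apply.

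With the branch fixed, the core is a direct induction. In the case $T_b(0) > 0$, set $L(x) = (1-s_{max})x + (\Cb(b)+m)$ and $U(x) = (1-s_{min})x + (\Cb(b)+M)$. Because $s(\cdot,\cdot) \in [0,1]$, both slopes lie in $[0,1]$, so $L$ and $U$ are non-decreasing; this monotonicity is exactly what lets the sandwich propagate under composition. Assuming $L^{d-1}(0) \le \sigma(b,d-1) \le U^{d-1}(0)$ (the base case $d=0$ being $0 \le 0 \le 0$), I would apply $L(x) \le T_b(x) \le U(x)$ at $x = \sigma(b,d-1) \ge 0$ and then use monotonicity of $L,U$ to conclude $L^d(0) \le \sigma(b,d) \le U^d(0)$. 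Finally, the closed form of an affine iterate, $A^d(0) = \beta\sum_{k=0}^{d-1}\alpha^k$ for $A(x) = \alpha x + \beta$, yields $L^d(0) = (\Cb(b)+m)\sum_{k=0}^{d-1}(1-s_{max})^k$ and $U^d(0) = (\Cb(b)+M)\sum_{k=0}^{d-1}(1-s_{min})^k$, which are precisely the claimed bounds. The case $T_b(0) < 0$ is identical after swapping $s_{min} \leftrightarrow s_{max}$ in the two affine maps and running the induction on the non-positive iterates; here the resulting upper envelope need not itself be non-positive, which is harmless since it is only required to dominate $\sigma(b,d)$. The case $T_b(0)=0$ is immediate, since $\sigma(b,0)=0$ together with $T_b(0)=0$ forces $\sigma(b,d)=0$ for all $d$ by induction.

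The main obstacle I anticipate is bookkeeping the direction of the inequalities: the two one-sided bounds on the inner minimum exchange $s_{min}$ and $s_{max}$ as the sign of $x$ flips, and one must use Proposition~\ref{prop:Tb} to guarantee that every iterate stays on the half-line where the chosen affine bounds are valid. Verifying that the affine slopes $1-s_{min}$ and $1-s_{max}$ are non-negative, so that the enveloping maps are monotone and the induction closes, is the other point requiring care, though it follows directly from $s(\cdot,\cdot)\in[0,1]$.
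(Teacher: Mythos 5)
Your proposal is correct and takes essentially the same route as the paper's proof: both establish the one-step affine sandwich $(1-s_{max})x + \Cb(b) + m \le T_b(x) \le (1-s_{min})x + \Cb(b) + M$ on the relevant half-line (with $s_{min}$ and $s_{max}$ swapped when $x \le 0$), use Proposition~\ref{prop:Tb} to keep every iterate $\sigma(b,d)$ on that half-line, and close an induction yielding the geometric-sum bounds. The only cosmetic difference is that you bound the inner minimum term by term, whereas the paper substitutes the minimizer $\mu(b,d)$ into $T_b$ and bounds the factor $1 - s(\mu(b,d),i)$ and the term $\Cp(\mu(b,d)) - s(\mu(b,d),i)\Cd$ separately.
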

\begin{proof}
  By substituting the optimal policy into the minimization in $T_b$ we have
  \begin{align}
    \sigma(b,d+1) &= T_b(\sigma(b,d))
    = \sigma(b,d) + \Cb(b) 
    + \min_{p \in \Pcal} \left\{\Cp(p) - s(p, i)(\Cd + \sigma(b,d))\right\}\\
    &= \sigma(b,d) + \Cb(b) + \Cp(\mu(b,d)) - s(\mu(b,d), i)(\Cd + \sigma(b,d))\\
    &= (1 - s(\mu(b,d), i))\sigma(b,d) + \Cb(b) + \Cp(\mu(b,d)) - s(\mu(b,d), i)\Cd
  \end{align}

  Consider the case when $T_b(0) > 0$ so that $\sigma(b, d) > 0$. By
  first optimizing $\Cp(p) - s(p, i)\Cd$ over $p \in \Pcal$ and then
  optimizing $(1 - s(p, i))$ over $p \in \Pcal$, we have the following
  inequality:
  \begin{equation}
    (1 - s_{max})\sigma(b,d) + \Cb(b) + m 
    \leq \sigma(b,d+1) \leq 
    (1 - s_{min})\sigma(b,d) + \Cb(b) + M
  \end{equation}
  For $d = 1$, the condition holds because $\Cb(b) + m = \sigma(b, 1) \leq \Cb(b)
  + M$. If the condition holds for $\sigma(b, d)$ then we have that
  \begin{align}
    \sigma(b, d + 1) 
    &\leq (1 - s_{min})\sigma(b, d) + \Cb(b) + M
    \leq (1 - s_{min})\left( (\Cb(b) + M)\sum_{k=0}^{d-1}(1 - s_{min})^k\right)
    + \Cb(b) + M\\
    &= (\Cb(b) + M)\sum_{k=0}^{d}(1 - s_{min})^k.
  \end{align}
  We can apply similar reasoning for the lower bound. By induction, the condition
  holds for all $d$.

  The case when $T_b(0) < 0$ and $\sigma(b, d) < 0$ is analogous. 

  Recall that $\sigma(b, d)$ can be computed by iteratively applying
  $T_b(\cdot)$ to 0. When $T_b(0) = 0$, we have that $\sigma(b, d) = 0$ for all
  $d$.
\end{proof}

Proposition~\ref{prop:OCb} tells us that $b \mapsto \sigma(b, d)$ is
bounded above and below by affine functions of $\Cb(b)$. Intuitively,
the important thing to note is that for any fixed $d$, $\sigma(b, d)$
is roughly proportional to $\Cb(b)$. For example, in the degenerate
case when $\Pcal$ is a singleton, the upper and lower bounds are equal
and $\sigma(b,d)$ is an affine function of $\Cb(b)$.  Furthermore, the
inequalities hold for any $b$ no matter how large $B$ is.

In the previous propositions, we only assumed that $\Cp(\cdot)$ and
$s(\cdot, i)$ were non-decreasing. Now we consider when $\Cp(\cdot)$
is linear and $s(\cdot, i)$ is concave. Assuming $\Cp(\cdot)$ is
linear is somewhat limiting, but linear power costs form an important
class that has been considered in the communications engineering
literature (e.g. \cite{Cruz_INFOCOM_2003, Goyal_INFOCOM_2003}) as well
as in the information theory literature
(e.g. \cite{Bettesh_IEEEIT_2006}). In addition, although assuming
$s(\cdot, i)$ is concave may seem limiting, we will use this as a step
towards considering a much broader class of success probability
functions. We also introduce the following definition which allows us
to precisely state the necessary result.
\begin{definition}
  We say a non-negative real-valued function $f$ is sublinear if 
  \begin{enumerate}[(i)]
  \item there exists $K_1, K_2 \in (0, \infty)$ such that $K_1x + K_2
    \geq f(x)$ for all $x \geq 0$, and
  \item the asymptotic behavior of $f$ is such that
    $$\lim_{x \rightarrow \infty} \frac{f(x)}{x} = 0.$$
  \end{enumerate}
\end{definition}
This definition differs slightly from the standard computer science definition
of sublinearity \cite{CLRS_2001} which would only require the second
condition. The second condition only considers the asymptotic behavior of the
function while the first condition also captures the behavior over finite
subsets of the domain. This is important in our application because we are
minimizing over a finite set of power values $\Pcal$.

\begin{proposition}
  \label{prop:approx}
  Consider the following function:
  $$
  \gamma(b, d)
  = \argmin_{p \geq 0}\{ \Cp(p) - s(p, i)(\Cd + \sigma(b, d-1))\}
  $$
  Assume $\Cp(p) = Kp$ for some $K > 0$ and that for some function
  $\phi:[0,\infty) \rightarrow [0,\infty)$ that is differentiable,
  strictly increasing, and convex function, that
  $\frac{\partial}{\partial p}s(p, i) = \frac{1}{\phi(p, i)}$. Then,
  $\gamma(b, d)$ is a sublinear function of $\sigma(b, d-1)$.
\end{proposition}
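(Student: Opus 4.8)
The plan is to exploit that, because $\Cp$ is linear and the minimization is now over the \emph{continuous} set $p \ge 0$, the minimizer $\gamma(b,d)$ has a closed form coming from the first-order condition, and then to verify the two defining conditions of sublinearity for the resulting map $x \mapsto \gamma$, where I abbreviate $x = \sigma(b,d-1)$.

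First I would fix $b,d$ and set $w = \Cd + x$, so the inner objective is $h(p) = Kp - w\,s(p,i)$. On the relevant domain $x \ge 0$ we have $w \ge \Cd \ge 0$, with $w > 0$ once $x > 0$. Since $\frac{\partial}{\partial p}s(p,i) = 1/\phi(p)$ with $\phi$ strictly increasing, $s(\cdot,i)$ is strictly concave, hence $h$ is strictly convex with a unique minimizer. The stationarity condition $h'(p) = K - w/\phi(p) = 0$ gives $\phi(p) = w/K$, and as $\phi$ is a strictly increasing bijection onto its range this yields
\[
\gamma = \max\Bigl\{0,\ \phi^{-1}\bigl((\Cd + x)/K\bigr)\Bigr\},
\]
the clipping covering the case $w/K < \phi(0)$ where the unconstrained optimum is negative. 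Thus $\gamma$ is continuous and non-decreasing in $x$, and on the region where it is positive it is the composition of an affine increasing map with $\phi^{-1}$, which is concave and increasing because the inverse of a convex increasing function is concave.

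The heart of the argument is the asymptotic condition $\lim_{x\to\infty}\gamma(x)/x = 0$. Writing $u = (\Cd+x)/K \to \infty$ (so the clipping is inactive) and then $y = \phi^{-1}(u)$, this reduces to showing $\lim_{y\to\infty} y/\phi(y) = 0$, i.e.\ that $\phi$ is \emph{superlinear}. This is precisely where convexity alone is insufficient: a linear $\phi$ is convex yet would make $\gamma$ grow linearly. I would therefore invoke the paper's standing assumption that $s(\cdot,i)$ is a probability. Since $s' = 1/\phi \ge 0$ and $s(\cdot,i) \le 1$, the integral $\int_0^\infty \frac{dp}{\phi(p)} = \lim_{p\to\infty} s(p,i) - s(0,i) \le 1$ is finite. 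Convexity makes the secant slope $(\phi(p)-\phi(0))/p$ non-decreasing, so $\phi(p)/p$ tends to a limit $L \in (0,\infty]$; if $L$ were finite we would have $\phi(p) \le Lp + \phi(0)$ for all $p$, forcing $\int^\infty dp/\phi(p)$ to diverge logarithmically and contradicting finiteness. Hence $L = \infty$, $\phi$ is superlinear (in particular unbounded, so $\phi^{-1}$ is defined on all large arguments), and the asymptotic condition follows.

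Finally, condition (i) — a global affine bound $K_1 x + K_2 \ge \gamma(x)$ with $K_1,K_2 \in (0,\infty)$ — comes cheaply: the asymptotic relation gives some $X_0$ with $\gamma(x) \le x$ for $x \ge X_0$, while continuity of $\gamma$ bounds it by a constant $C$ on $[0,X_0]$, so $\gamma(x) \le x + (C+1)$ throughout; alternatively one dominates the concave branch of $\gamma$ by a tangent line. Both conditions then hold, so $\gamma(b,d)$ is a sublinear function of $\sigma(b,d-1)$. The single genuine obstacle is establishing the superlinearity of $\phi$ in the third paragraph — which crucially needs the probability constraint $s \le 1$, not merely convexity — while the remaining steps are routine bookkeeping around the closed form.
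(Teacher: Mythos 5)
Your proposal is correct, and it follows the paper's skeleton — reduce to the first-order condition $s'(p,i) = K/(\Cd + \sigma(b,d-1))$, invert to get $\gamma = \phi^{-1}\bigl((\Cd + \sigma(b,d-1))/K\bigr)$, and argue via concavity of $\phi^{-1}$ — but it diverges exactly where the paper's argument is thinnest, and your version is the more complete one. The paper concludes directly that ``$\phi^{-1}(\cdot,i)$ is strictly increasing and concave, therefore sublinear,'' which does not follow under the paper's own two-part definition of sublinearity: $\phi(p) = p$ is differentiable, strictly increasing, and convex, yet gives $\phi^{-1}(u) = u$, which fails condition (ii) ($\lim_{x\to\infty} f(x)/x = 0$). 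You correctly identify this and close the hole with a genuinely new ingredient absent from the paper: since $s(\cdot,i)$ is a probability, $\int_0^\infty dp/\phi(p) = \lim_{p\to\infty}s(p,i) - s(0,i) \le 1$ is finite, and since convexity makes the secant slope $(\phi(p)-\phi(0))/p$ monotone with limit $L$, finiteness of $L$ would force the integral to diverge logarithmically — so $\phi$ is superlinear and $\phi^{-1}$ satisfies condition (ii). (Note this shows the standing assumption $s(p,i)\in[0,1]$ is doing real work in the proposition; one could alternatively repair the paper's statement by assuming $\phi$ superlinear outright, but your route derives it.) You also handle the clipping $\gamma = \max\{0, \phi^{-1}((\Cd+x)/K)\}$ when $(\Cd + x)/K < \phi(0,i)$, a boundary case the paper silently skips — it only splits on the sign of $\Cd + \sigma$, not on whether the stationarity equation has a solution in $p \ge 0$ — and your affine-majorant argument for condition (i) is a correct finishing step the paper never spells out. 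In short: same closed-form mechanism, but where the paper trades rigor for brevity at the asymptotic step, you supply the missing superlinearity lemma; the paper's version buys the design insight (transmit power grows like $\phi^{-1}$ of the backlog cost) more quickly, while yours actually establishes the claimed sublinearity.
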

\begin{proof}
  First assume that $\Cd + \sigma(b, d-1) \geq 0$. This is always the
  case when $T_b(0) \geq 0$. The function being minimized is convex so
  the first-order optimality condition is both necessary and
  sufficient.
  $$
  \frac{\partial}{\partial p}\{ \Cp(p) - s(p, i)(\Cd + \sigma(b, d-1))\} 
  = K - s'(p, i)(\Cd + \sigma(b, d-1)) = 0
  \iff s'(p, i) = \frac{K}{\Cd + \sigma(b, d-1)}
  $$
  $\phi(\cdot, i)$ is strictly increasing so it is invertible. Therefore, 
  $$
  \gamma(b, d) = \phi^{-1}\left(\frac{\Cd + \sigma(b, d-1)}{K}, i\right)
  $$
  where we take the inverse with respect to the first argument.
  $\phi(\cdot, i)$ is strictly increasing and convex so
  $\phi^{-1}(\cdot, i)$ is strictly increasing and concave.
  Therefore, $\phi^{-1}(\cdot, i)$ is sublinear.

  When $\Cd + \sigma(b, d) < 0$, we are minimizing an increasing function so
  $\gamma(b, d) = 0$ which is trivially sublinear.
\end{proof}

We now briefly consider how one could adapt
Proposition~\ref{prop:approx} for the case that $s(\cdot, i)$ is
sigmoidal. Recall that while concave forms for $s(\cdot, i)$ have been
used in the transmitter power control literature
(e.g. \cite{Kandukuri_PCMA_2000}), sigmoidal forms for $s(\cdot, i)$
have been emperically shown to be more accurate
(e.g. \cite{Son_Sigmoid_2006}). A sigmoidal function of the form $p
\mapsto \left(1 - \frac{1}{2}\exp\left(-\beta_0 \frac{p}{i} +
    \beta_1\right)\right)^{\beta_2}$ will be strictly convex for
smaller values of $p$ and strictly concave for larger values of
$p$. As a result, two distinct values of $p$ will satisfy the
first-order optimality criterion used in the proof of
Proposition~\ref{prop:approx}.

To alleviate this problem, we can adjust the definition of $\gamma(b,
d)$. Instead of using
$$
\gamma(b, d)
= \argmin_{p \geq 0}\{ \Cp(p) - s(p, i)(\Cd + \sigma(b, d-1))\}
$$
we can use
$$
\gamma(b, d)
= \argmin_{p \geq 0}\{ \Cp(p) - s_{ccv}(p, i)(\Cd + \sigma(b, d-1))\}
$$
where $s_{ccv}(\cdot, i)$ is the \emph{concave envelope} of $s(\cdot, i)$,
i.e.
$$s_{ccv}(\cdot, i) = \inf\{h(\cdot) : s(p, i) \leq h(p)\,\forall\, p \geq 0,\, h(\cdot) \hbox{ is concave}\}.$$
The concave envelope of a function is analogous to the convex envelope
of a function and such envelopes be used to approximately minimize
non-convex (or maximize non-concave) function (see
e.g. \cite{Boyd_CVXBook}). However, in this case, using
$s_{ccv}(\cdot, i)$ instead of $s(\cdot, i)$ yields an exact result
rather than an approximation. Indeed, we are using $s_{ccv}(\cdot, i)$
instead of $s(\cdot, i)$ only in the definition of $\gamma(b,d)$ and
not in the definitions of $\delta(b, d)$ or $\sigma(b, d)$. As a
result, $s_{ccv}(\cdot, i)$ does not change the minimum but merely
makes the first-order optimization condition sufficient for finding
this minimum. This is depicted in Figure~\ref{fig:envelope}. 

Furthermore, for a given sigmoidal function $s(\cdot, i)$, it is easy
to compute $s_{ccv}(\cdot, i)$. By definition, $s_{ccv}(\cdot, i)$ is
the smallest concave function that majorizes $s(\cdot, i)$. Since
$s(\cdot, i)$ is concave for sufficiently large values, there is some
$p^*$ such that $s_{ccv}(p, i) = s(p, i)$ for all $p \geq p^*$. In
addition, because $s(\cdot, i)$ is convex for $p \in [0, p^*]$, the
concave envelope will be linear on $[0, p^*]$. Hence,
$$s_{ccv}(p, i)
= \left\{
  \begin{array}{cl}
    K_{ccv}p + s(0, i) &, p \in [0, p^*)\\
    s(p, i) &, p \geq p^*
  \end{array}
\right.
$$
where $p^* \in (0, \infty)$ satisfies
$$s'(p^*, i) = \frac{s(p^*, i) - s(0, i)}{p^*}$$
and $K_{ccv} = s'(p^*, i)$. Figure~\ref{fig:envelope} shows this
graphically.

This demonstrates that (with a slight modification)
Proposition~\ref{prop:approx} applies to both concave as well as
sigmoidal success probability functions. Proposition~\ref{prop:OCb}
tells us that $\sigma(b, d)$ is roughly proportional to $\Cb(b)$ so
Proposition~\ref{prop:approx} indicates that $\gamma$ is
well-approximated by a sublinear function of $\Cb(b)$.  Introducing
$\phi$ and $\gamma$ allows us to to make use of continuous
optimization theory which leads to closed-form results that wouldn't
arise when optimizing over the finite set $\Pcal$. We will see below
that $\gamma$ is useful for designing power control schemes.

\begin{figure}
  \centering
  \begin{subfigure}{0.5\textwidth}
    \includegraphics[width=\textwidth]{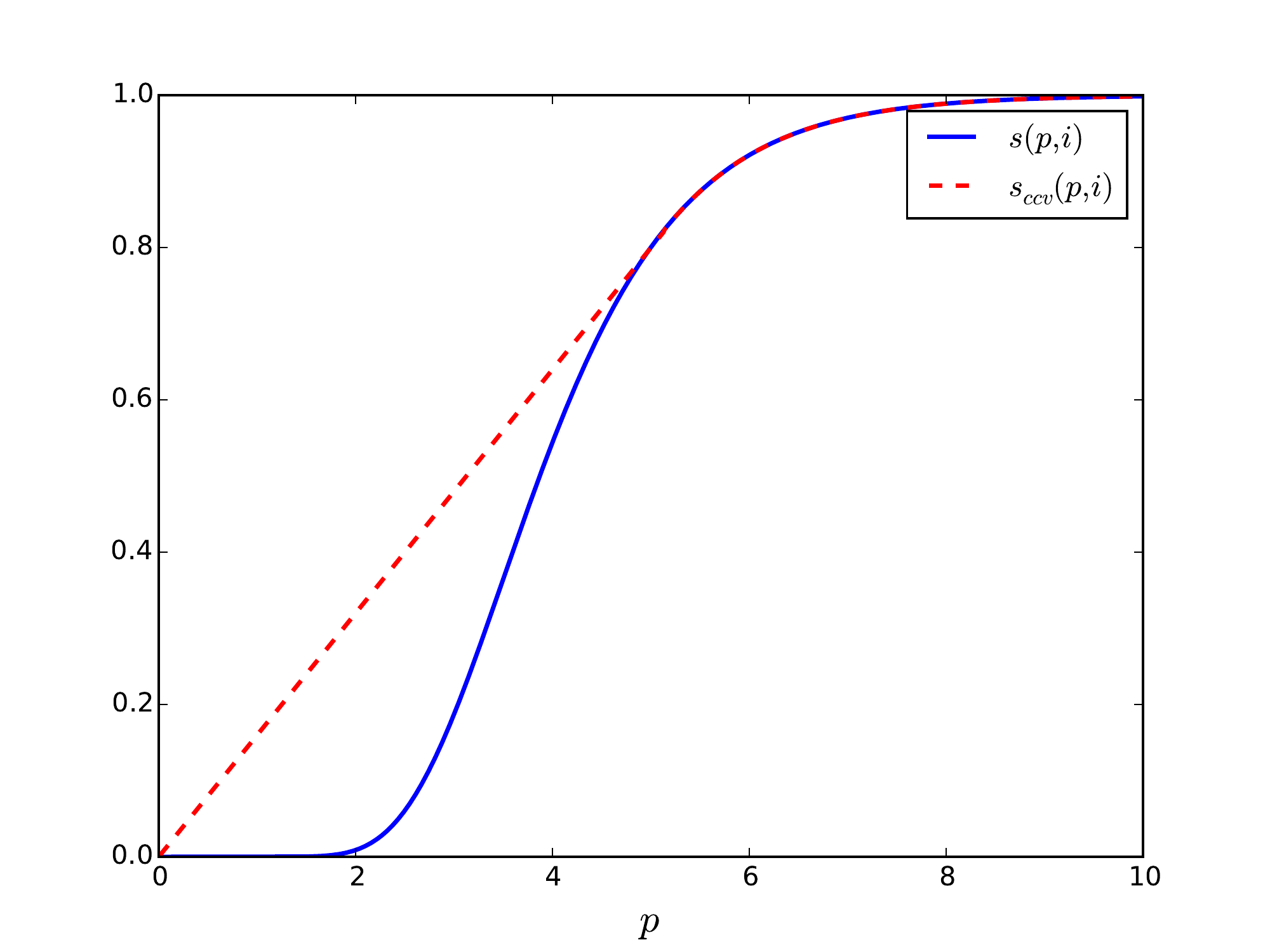}
    \caption{
      \label{fig:ccv-env}}
  \end{subfigure}%
  \begin{subfigure}{0.5\textwidth}
    \includegraphics[width=\textwidth]{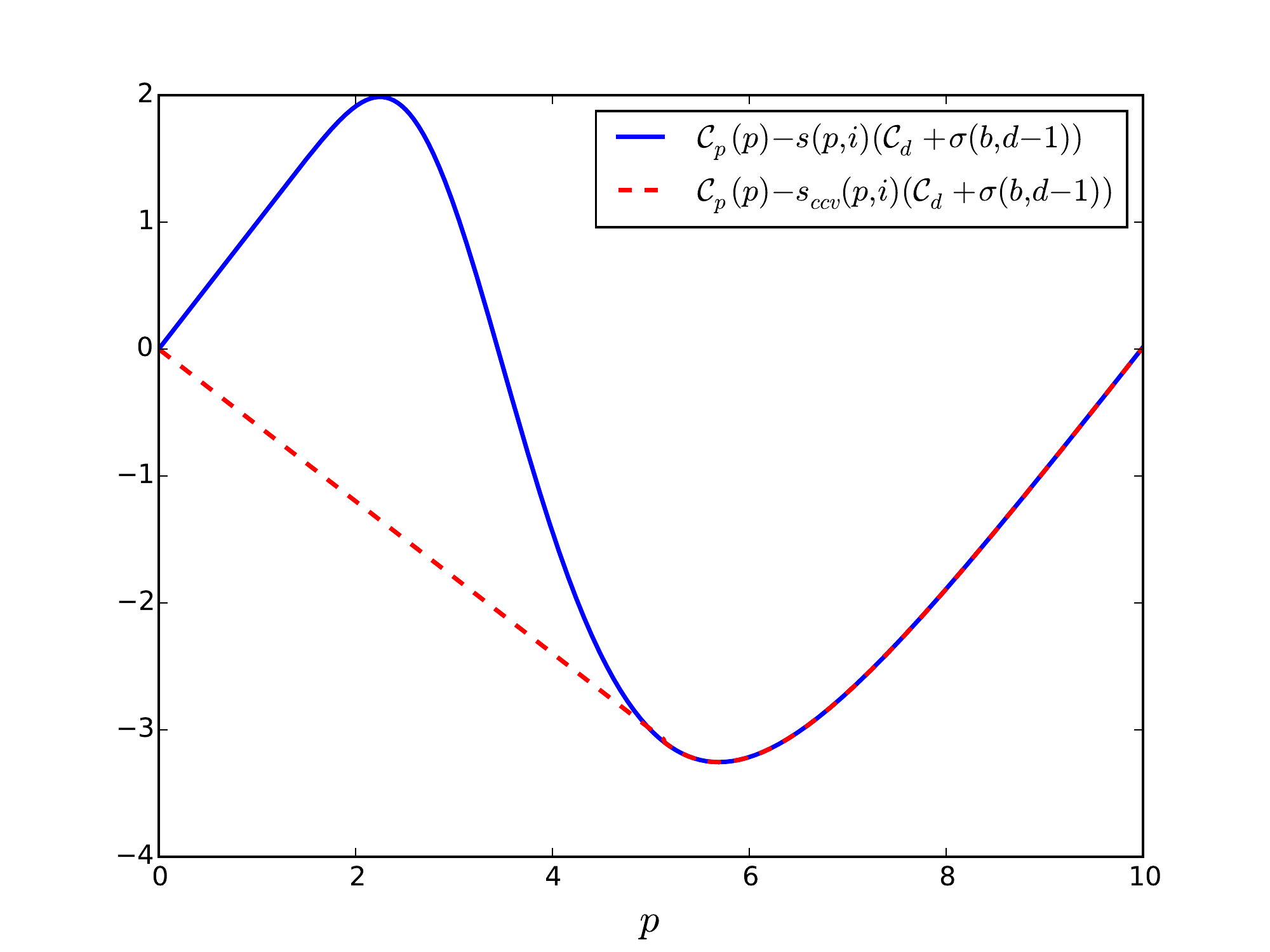}
    \caption{ 
      \label{fig:ccv-env-opt}}
  \end{subfigure}
  \caption{A depiction of how the concave envelope of a sigmoidal
    success probability function can be used to simplify $\gamma(b,
    d)$. Figure~\ref{fig:ccv-env} demonstrates how $s_{ccv}(\cdot, i)$
    compares to $s(\cdot, i)$. Figure~\ref{fig:ccv-env-opt} shows how
    using $s_{ccv}(\cdot, i)$ makes the first-order optimality
    condition sufficient for minimization.\label{fig:envelope}}
\end{figure}

\section{Sublinear-Backlog Power Control Algorithm \label{sec:SLBPC}}
Now we would like to to leverage the theoretical results above in
order to build a low-complexity power control scheme which
approximates the original optimal control scheme. Intuitively, these
low-complexity power controls are schemes which aim to mimic the
behavior of $\mu(b, d, i)$. In Section~\ref{ssec:control}, we defined
$\Pi$ as the set of deterministic policies but we will see that it can
be useful to use a stochastic scheme to capture the prominent
structural features proven above. To demonstrate our methodology, we
focus on a particular functional form of $s(p, i)$. We consider a
linear power cost and present two schemes. The first makes use of the
fact that $\mu(b,d)$ grows sublinearly with $\Cb(b)$ for any fixed
$d$. The second introduces a further refinement which incorporates how
$\mu(b, d)$ varies with $d$ and $i$.

\subsection{A Particular Success Probability Function}
The function $s(\cdot, \cdot)$ characterizes the wireless channel. In
Section~\ref{ssec:channel}, we gave some natural conditions that
$s(\cdot, \cdot)$ must satisfy but there is still considerable
flexibility available for modeling different types of channels. As
discussed above with regard to Propsition~\ref{prop:approx}, our
results apply to concave functions like $s(p, i) = p/(p + i)$ and
$s(p, i) = 1 - \exp(p/i)$ and also sigmoidal functions of the form
$s(p, i) = \left(1 - \frac{1}{2}\exp\left(-\beta_0 \frac{p}{i} +
    \beta_1\right)\right)^{\beta_2}$. To demonstrate the applicability
of our results, we choose to focus on a simple form. The same approach
can be applied with more complicated functions, but the closed forms
involved will be more cumbersome.

With this in mind, consider $s(p, i) = 1 - \exp(-p/i)$. Using the
notation from Proposition~\ref{prop:approx} gives us that $\phi(p, i) =
i\exp(p/i)$. When $T_b(0) \geq 0$ and $i = 1$,
\begin{align}
  \gamma(b, d) &= \ln\left(\frac{\Cd + \sigma(b,d-1)}{K}\right),\\
  \gamma(b, 2) 
  &= \ln\left(\frac{\Cd + \sigma(b,1)}{K}\right)
  = \ln\left(
    \frac{\Cd + \Cb(b) +\min_{p \in \Pcal}\{\Cp(p) - s(p,1)\Cd\}}
    {K}\right).
\end{align}
This demonstrates that the power (roughly) grows logarithmically with
the backlog cost and thus justifies the ``sublinear-backlog''
heuristic that will be utilized in the following schemes.

\subsection{Sublinear-Backlog Power Control 1 (SLBPC1)}
We first create a power control scheme adapts to the packet backlog,
but does not react to approaching deadlines or to fluctuating
interference. This gives us a simple policy which is a function of
only $b$. The previous calculation suggests that using $\gamma(b, 2)$
would be a reasonable approximate policy that meets this
requirement. Since $\gamma(b, 2)$ takes values in $\Rbb_{\geq 0}$
rather than $\Pcal$, we define $\hat\gamma$ as follows:
\begin{equation}
  \hat\gamma(b) = \min\argmin_{p \in \Pcal}\abs{p - \gamma(b, 2)}
\end{equation}
So SLBPC1 is a deterministic policy defined by $\hat\gamma(b)$. This
type of sub-optimal policy is useful because it does not require the
transmitter to know anything about the current
interference. Furthermore, $\hat\gamma$ has a simple form which can be
easily implemented. On the other hand, SLBPC1 does not adapt to
approaching deadlines or to fluctuating interference so it will not
behave entirely like the optimal control.

\subsection{Sublinear-Backlog Power Control 2 (SLBPC2)}
Our second power control scheme builds on SLBPC1 by incorporating our
understanding of how to react to interference and approaching
deadlines. Recall that for fixed interference, the sign of $T_b(0)$
determines whether the optimal policy will increase or decrease power
as the deadline approaches. We will now apply this idea to the case of
fluctuating interference. Let $f_b(i)$ as follows:
\begin{equation}
  f_b(i) = T_b(0) = \Cb(b) + \min_{p \in \Pcal}\left\{\Cp(p) - s(p, i)\Cd\right\}
\end{equation}
Define the ``signum'' function as $\sgn(x) = 2\Ind{x \geq 0} - 1$. We
will use $\sgn(f_b(i))$ to stochastically drift the power value up or
down as the interference and backlog change. More concretely, LBCP2 is
described as follows:
\begin{enumerate}
\item When there are $b$ packets in the buffer, the initial power level for a
  new HOL packet is $\hat\gamma(b)$.
\item If the previous transmission attempt is not successful, the transmitter
  computes $\sgn(f_b(i))$. If $\sgn(f_b(i))$ is positive, the power is increased
  to the next highest level with probability $p_{change}$. If $\sgn(f_b(i))$ is
  negative, the power is decreased to the next lowest level with probability
  $p_{change}$. If a higher/lower power level does not exist and the power needs
  to be increased/decreased, the power level remains the same.
\item This process is followed until the packet is successfully transmitted or
  until the deadline is exceeded. The process is then continued with a backlog
  of $b - 1$.
\end{enumerate}

We still need to choose a value for $p_{change}$. The packet will be
at the HOL for a number of time slots in $\{1, 2, \hdots, D\}$. As a
result, $p_{change}$ should be inversely proportional to $D$ so that
when $D$ becomes large $p_{change}$ becomes small. This corresponds to
the case of not having deadlines and the power control being constant
for the packet. In practice, a precise value can be chosen
experimentally but we take $p_{change} = 1/(2D)$.

SLBPC2 adapts to interference, deadlines, and backlog while SLBPC1 only adapts to
backlog. However, this requires an increase in complexity. SLBPC2 needs to store
the previous transmission power. SLBPC2 requires a reliable RSSI and also needs
to compute $\sgn(f_b(i))$. Finally, note that SLBPC2 requires a psuedo-random
number generator (PRNG). On the other hand, SLBPC2 is still low complexity when
compared to the optimal control. SLBPC2 does not require solving dynamic
programming equations and does not require knowledge of the interference
transition matrix $P$.

\section{Performance Evaluation \label{sec:performance}}
We take a two-pronged approach to the performance evaluation. In the
first approach, we verify that the SLBPC algorithms perform similarly
to the optimal control policy in some simple cases where the optimal
control can easily be computed. We then consider a more detailed
simulation in which power control is not on the same time scale as the
interference dynamics and there are also packet arrivals. In this more
detailed setting computing the optimal control is not tractable so we
compare SLBPC2 to some benchmark algorithms and show a statistically
significant increase in performance.

\subsection{Verification}
In this performance evaluation, we will compare the following power control
schemes:
\begin{itemize}
\item DP, the optimal control policy defined by $\mu(b, d, i)$. 
\item MIN, a benchmark power control scheme which always transmits at the
  minimum power.
\item MAX, a benchmark power control scheme which always transmits at the
  maximum power.
\item SLBPC1, the scheme defined above which does not adapt to deadlines or
  interference. 
\item SLBPC2, the scheme defined above which does adapt to approaching deadlines
  and fluctuating interference.
\end{itemize}
We will take $B = 20$, $D = 5$, $\Cb(b) = b$, $s(p, i) = 1 - \exp(p/(2i))$,
$\Pcal = \{0.1, 0.2, 0.4, 0.8\}$, and $\Cd = 1$. We will take $\Cp(p) = Kp$
where $K > 0$ will be varied to show how the power control schemes are affected
by different power sensitivities. Note that these parameters are not meant to
describe a specific system. This performance evaluation is meant to illustrate
the various trade-offs between the different power control schemes.

We assume the interference takes two possible values $i_t \in \{1,
2\}$. Let $p_u$ be the probability that the interference moves from
low to high (i.e. up) and let $p_d$ is the probability of the
interference moving from high to low (i.e. down). The transition
matrix is therefore given by
\begin{equation}
  P = 
  \left[
    \begin{array}{cc}
      1 - p_u & p_u\\
      p_d & 1 - p_d
    \end{array}
  \right].
\end{equation}
By varying $p_u$ and $p_d$, we can simulate slow fading and fast fading
channels. In particular, when $p_u$ and $p_d$ are small, we have a slow fading
channel and when $p_u$ and $p_d$ are large, we have a fast fading channel. It is
important to investigate these different cases because SLBPC1 and SLBPC2 were
designed with properties from the fixed interference case. Fixed interference
can be a good approximation of a slow fading channel but typically not of a fast
fading channel.

For each scheme we vary $K$, $p_u$, and $p_d$ and compute different
performance metrics via Monte Carlo simulation with 2000 particles. Because of
how we formulated the optimal control problem, the obvious metric is the total
cost. In addition, we compute the average fraction of dropped packets and the
average power per packet.

\subsubsection{Slow Fading}
We begin by simulating a slow fading channel in which $p_u = p_d = 0.1$. Because
$1/p_u = 1/p_d = 10$ and $D = 5$, we expect that each packet will only
experience a fixed level of interference. We first vary $K$ and plot the
estimated total cost incurred by each of the power control schemes. The results
are shown in Figure~\ref{fig:slowFadeCost}. When the power sensitivity is low
(i.e. small values of $K$), the DP and SLBPC schemes perform like the MAX scheme
and when the power sensitivity is high (i.e. large values of $K$), the DP and SLBPC
schemes perform like the MIN scheme. For moderate power sensitivities, the DP
outperforms both SLBPC schemes but the SLBPC cost curves are qualitatively similar
to the DP cost curve. We also see that when it comes to total cost, the SLBPC
schemes are almost indistinguishable.

\begin{figure}
  \centering
  \begin{subfigure}{0.5\textwidth}
    \includegraphics[width=\textwidth]{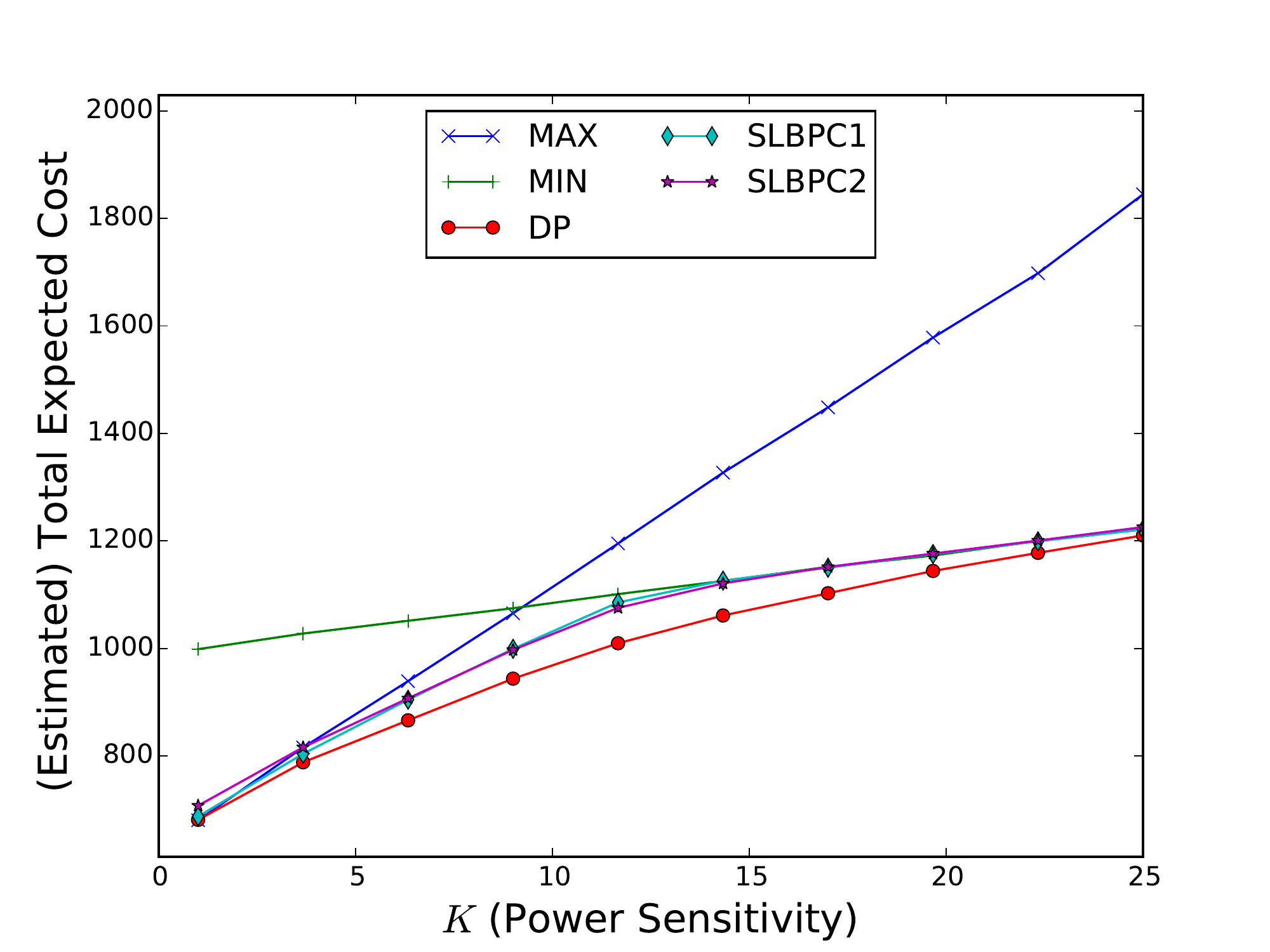}
    \caption{ Total Cost vs. Power Sensitivity, Slow Fading 
      \label{fig:slowFadeCost}}
  \end{subfigure}%
  \begin{subfigure}{0.5\textwidth}
    \includegraphics[width=\textwidth]{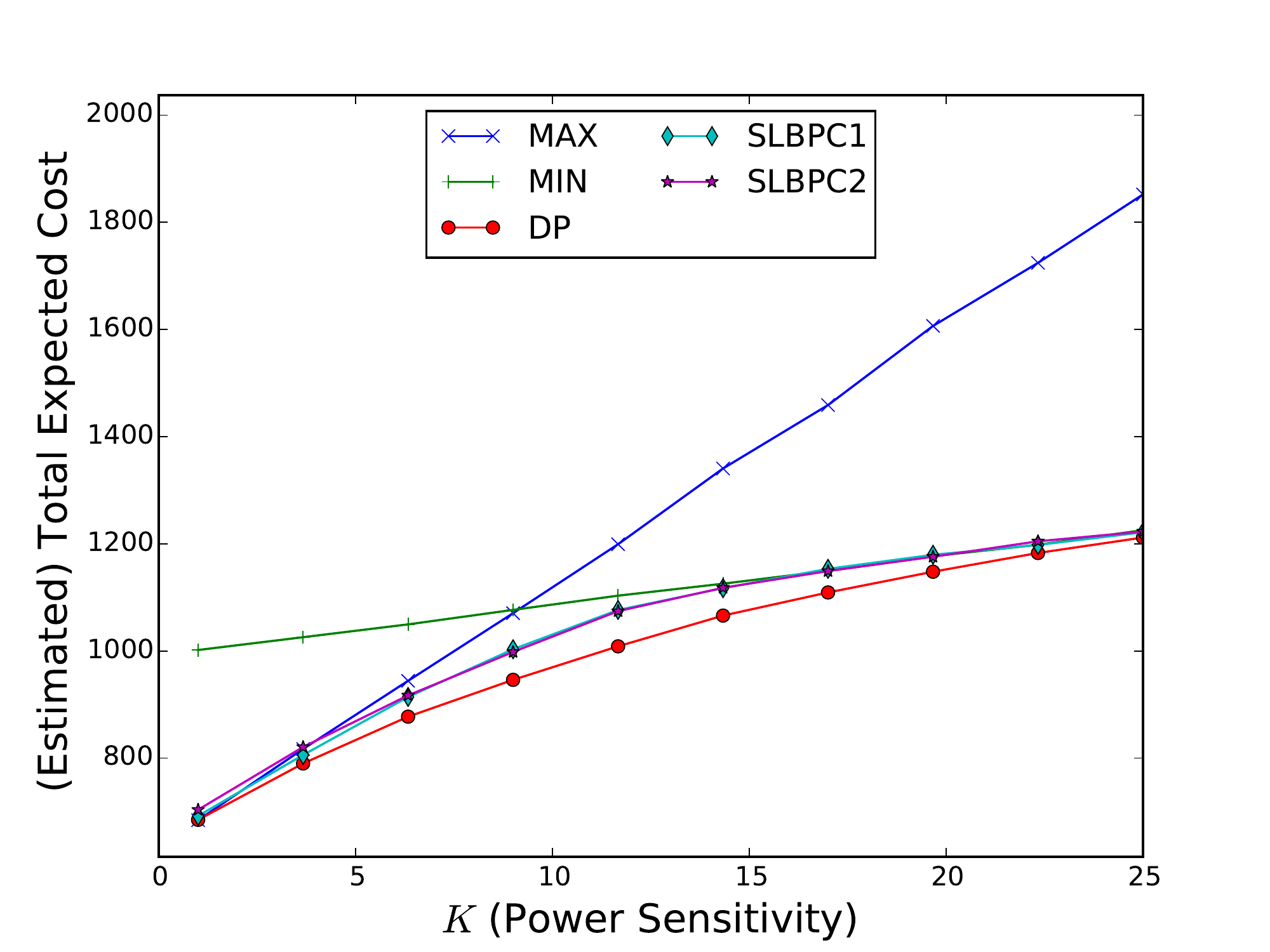}
    \caption{ Total Cost vs. Power Sensitivity, Fast Fading
      \label{fig:fastFadeCost}}
  \end{subfigure}
  \caption{
    Estimated total cost as a function of power sensitivity. The power cost is
    $\Cp(p) = Kp$ so a lower value of $K$ indicates less sensitivity of power
    while a higher value of $K$ indicates a greater sensitivity. In the slow
    fading case, $p_u = p_d = 0.1$. In the fast fading case, $p_u = p_d =
    0.9$. For low values of $K$, the DP and SLBPC schemes are just like the MAX
    scheme. For high values of $K$, the DP and SLBPC schemes are just like the
    MIN scheme. For moderate values of $K$, both SLBPC schemes qualitatively
    match the DP scheme. These observations hold for both the slow fading and
    fast fading cases.
    \label{fig:cost}
  }
\end{figure}

The SLBPC schemes achieve a very similar total cost, but we know that they do not
achieve this total cost in the same way. Because the SLBPC2 scheme adapts to
interference and deadlines, SLBPC2 behaves more similarly to the DP scheme than
SLBPC1 does. To demonstrate this, we choose a few values for $K$ and compute the
average power per packet and the average fraction of dropped packets. The
results are shown in Figure~\ref{fig:slowFadeMetrics}. In each case, SLBPC2 is
more similar to the DP scheme than SLBPC1 is. SLBPC1 and SLBPC2 may achieve similar
total costs, but SLBPC2 is more similar in behavior to the DP scheme.

\begin{figure}
  \centering
  \begin{subfigure}{\textwidth}
    \includegraphics[width=0.5\textwidth]
    {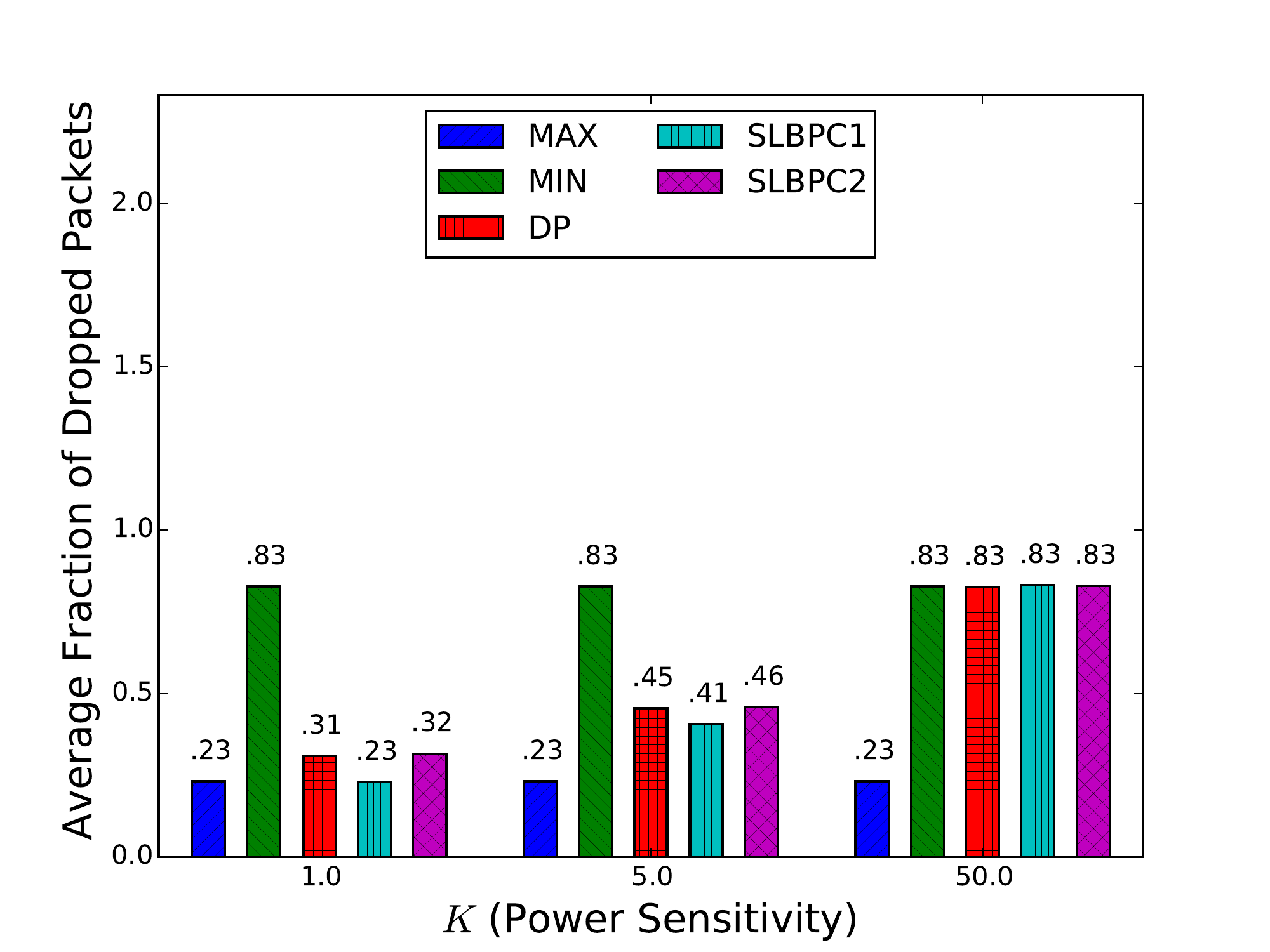}%
    \includegraphics[width=0.5\textwidth]
    {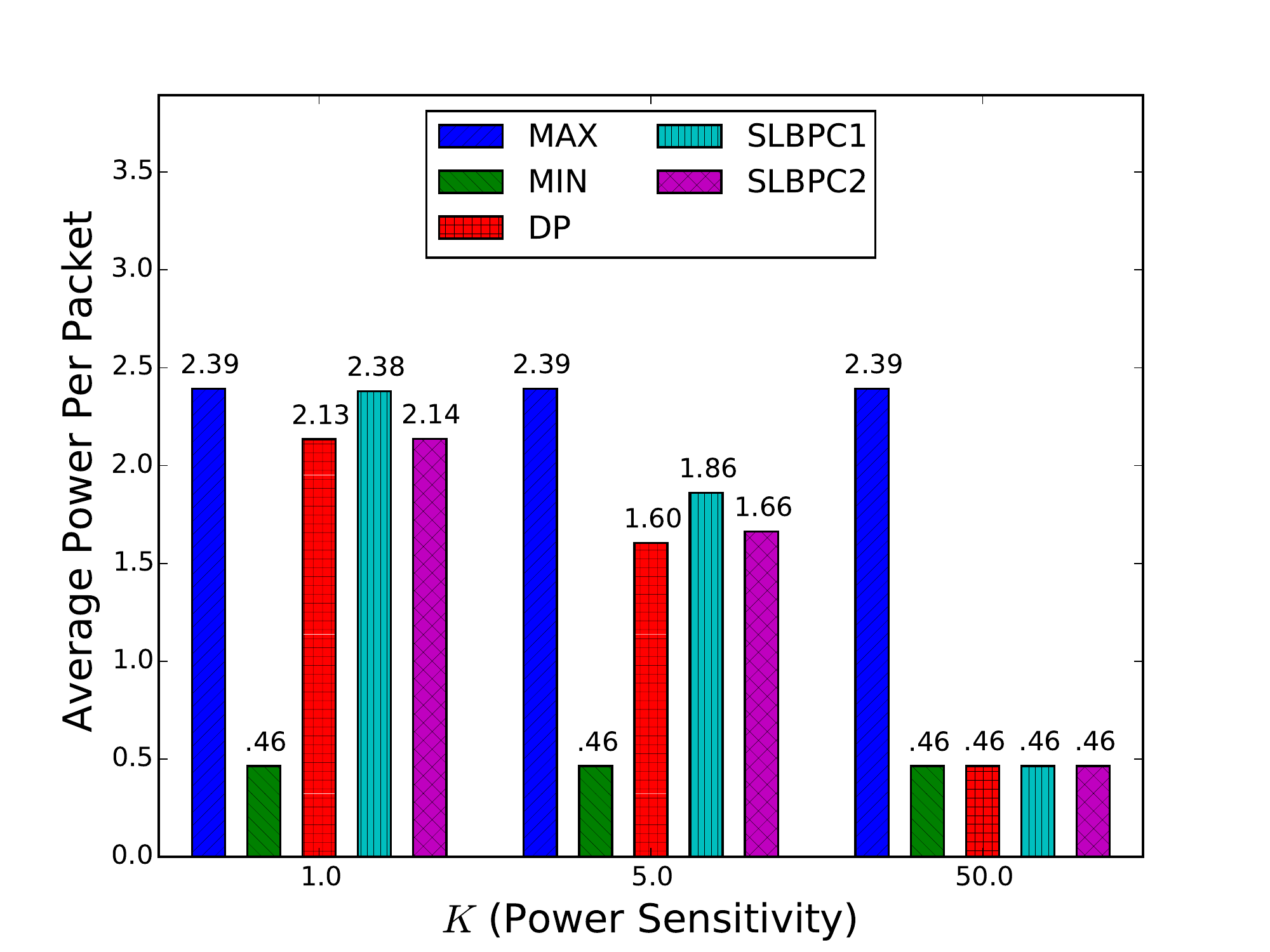}
    \caption{ Performance Metrics for Various $K$, Slow Fading
      \label{fig:slowFadeMetrics}}
  \end{subfigure}
  \begin{subfigure}{\textwidth}
    \includegraphics[width=0.5\textwidth]
    {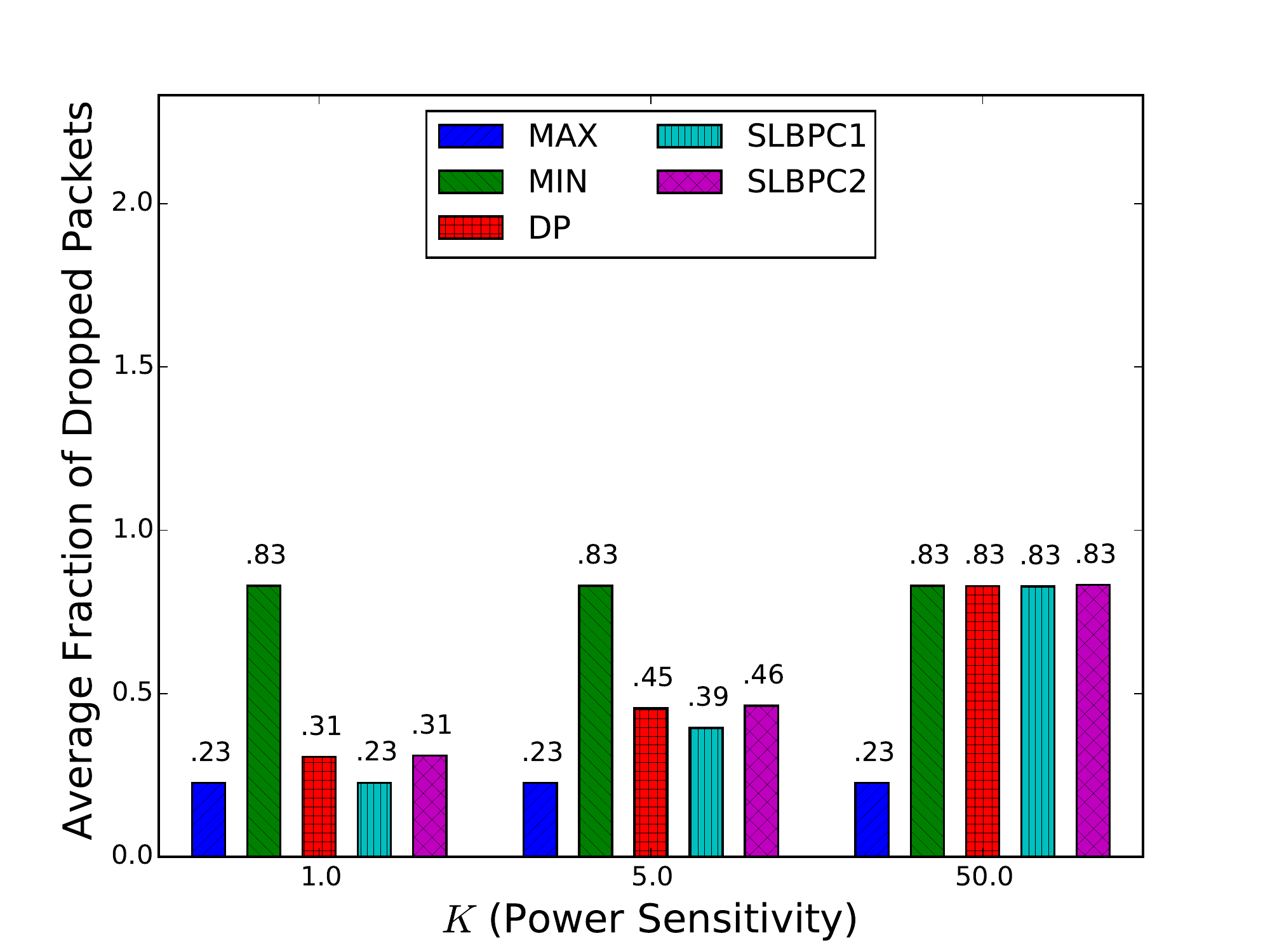}%
    \includegraphics[width=0.5\textwidth]
    {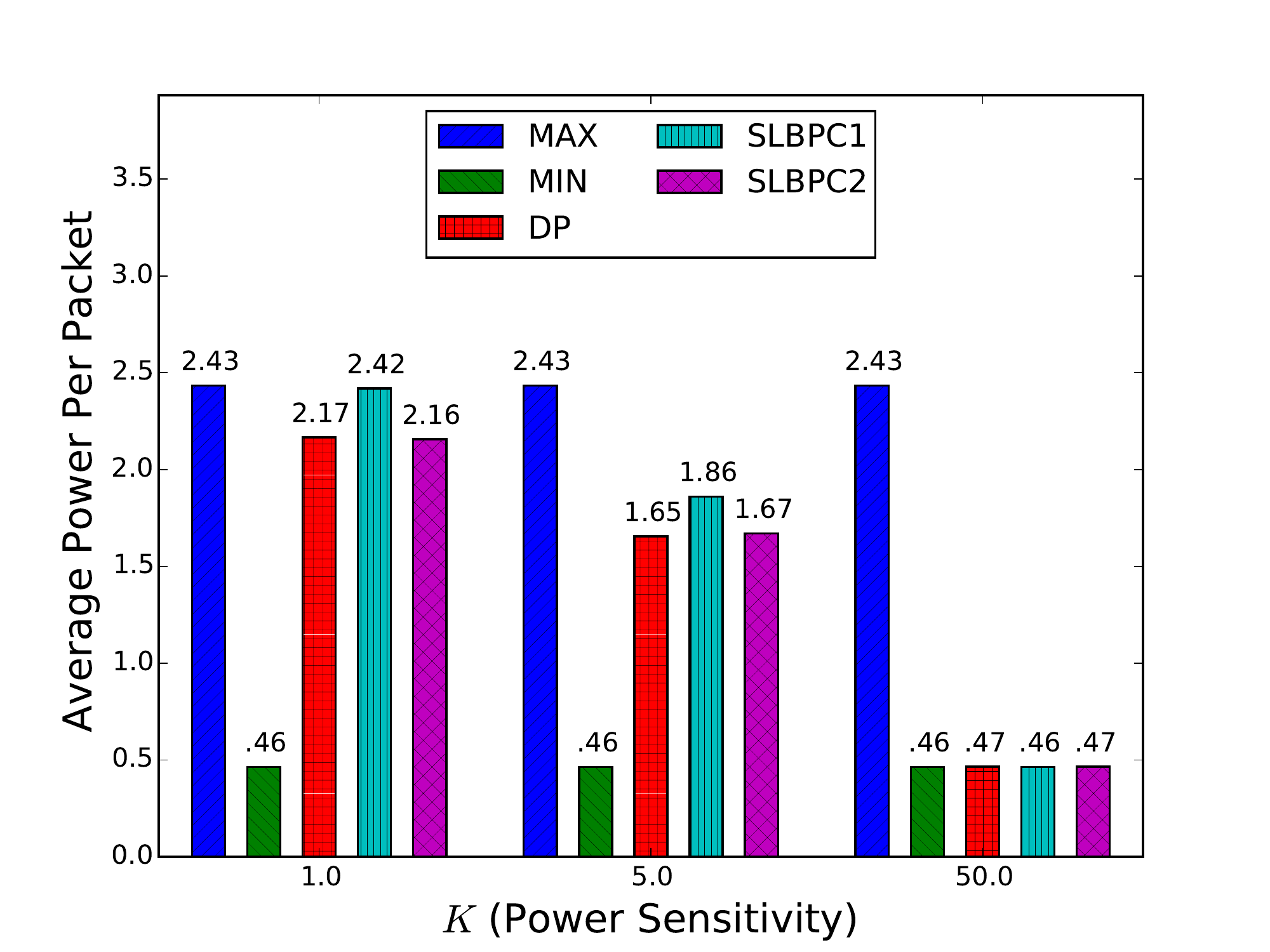}
    \caption{ Performance Metrics for Various $K$, Fast Fading
      \label{fig:fastFadeMetrics}}
  \end{subfigure}
  \caption{
    For a few different values of $K$, the schemes are compared on the average
    fraction of dropped packets and the average power expended per packet.  In
    the slow fading case, $p_u = p_d = 0.1$. In the fast fading case, $p_u = p_d
    = 0.9$. We see that SLBPC2 behaves more similarly to the DP scheme than SLBPC1
    even though the SLBPC schemes have similar total costs. This demonstrates
    that reacting to deadlines and adapting to fluctuating interference is
    important when approximating the optimal control.
    \label{fig:metrics}
  }
\end{figure}

\subsubsection{Fast Fading}
Now we consider a fast fading channel for which $p_u = p_d = 0.9$. Because
$1/p_u = 1/p_d = 1.\overline{11}$ while $D = 5$, we expect the interference to
change multiple times during the transmission of a single packet. As a result,
having knowledge of the interference transition matrix $P$ could give the DP
power control scheme a significant advantage over the SLBPC schemes. 

However, our simulations show that the SLBPC schemes do not degrade in
performance due to a fast fading channel. In Figure
\ref{fig:fastFadeCost}, we plot the total cost as a function of the
power sensitivity $K$. As with the slow fading channel, we see that
the DP scheme acts like the MAX scheme for low $K$, like the MIN
scheme for high $K$, and is better than all schemes for moderate
$K$. We also see that the cost curves for the SLBPC schemes are
qualitatively similar to the cost curve for the DP scheme. In
addition, the optimality gap is not significantly greater than it was
in the case of a slow fading channel. Furthermore, when considering
the average fraction of dropped packets and the average power per
packet in Figure~\ref{fig:fastFadeMetrics}, we see the same trends as
before. Both SLBPC schemes are similar to the DP scheme in terms of
total cost, but SLBPC2 is more similar to the DP scheme in terms of
behavior.

\subsection{Detailed Simulation}
In this section we consider a more realistic simulation in which our
assumptions do not always hold. We fix $\Cd = 1$, $\Cb(b) = b$, $D =
3$, and $\Pcal = \{0.5, 1.0, 2.0, 4.0, 8.0\}$.  The transmitter starts
with $B = 100$ packets and in each time slot a new packet arrives with
probability $0.1$. We consider a two-state channel where the low
interference level is 8.0 and the high interference level is 16.0. The
transition matrix for the channel is given by
\begin{equation}
  P = 
  \left[
    \begin{array}{cc}
      0.1 & 0.9\\
      0.9 & 0.1
    \end{array}
  \right]
\end{equation}
so that we have a fast fading channel. In addition, we assume that the
channel dynamics are at a time scale that is twice as fast as the
transmission dynamics. In other words, the interference will take two
values over the course of a transmission time slot and the transmitter
only has knowledge of the first value. If the transmitter uses power
level $p$ and the interference is at level $i$ and then $i'$, the
probability of successfully transmitting the packet is
$$s_{chan}(p, i, i') = 1 - \exp(-p / \max(i, i')).$$
Since the transmitter only knows the value of $i$ (and not $i'$), the
SLBPC algorithms use
$$s(p, i) = 1 - \exp(-p / i)$$
as a success probability function. 

We compare the SLBPC algorithms to a power control policy call AVG
which is parameterized by $\alpha \in [0, 1]$. Let $p_{min} =
\min\Pcal$ and $p_{max} = \max\Pcal$. Then with probability $\alpha$,
AVG uses $p_{min}$ and with probability $1 - \alpha$, AVG uses
$p_{max}$. By varying $\alpha$, AVG can (on average) use any power
level between $p_{min}$ and $p_{max}$. For example, when $\alpha = 1$,
AVG is equivalent to MIN and when $\alpha = 0$, AVG is equivalent to
MAX.

We simulate the performance of SLBPC1, SLBPC2, and AVG to calculate
the average packet error rate and the average total time until the
transmitter has zero remaining packets. We vary $K$ (for SLBPC) and
$\alpha$ (for AVG) to get curves that demonstrate the trade-offs
between power, packet error rate, and total transmission time. We run
$10^4$ simulations for each point on the curve. The results are shown
in Figure~\ref{fig:curves}. Note that the power insensitive regime
(i.e. low $\alpha$ and low $K$) is in the lower right and the power
sensitive regime (i.e. high $\alpha$ and high $K$) is in the upper
left. In the power insensitive regime, all power control schemes will
be roughly equivalent because they will all use high power levels for
transmission. Note that the optimal control framework did not
explicitly optimize for minimizing packet drop rate or for total
transmission time. The costs used implicitly account for these
performance metrics but it isn't immediately clear how SLBPC1 and
SLBPC2 will compare.

\begin{figure}
  \centering
  \begin{subfigure}{0.5\textwidth}
    \includegraphics[width=\textwidth]{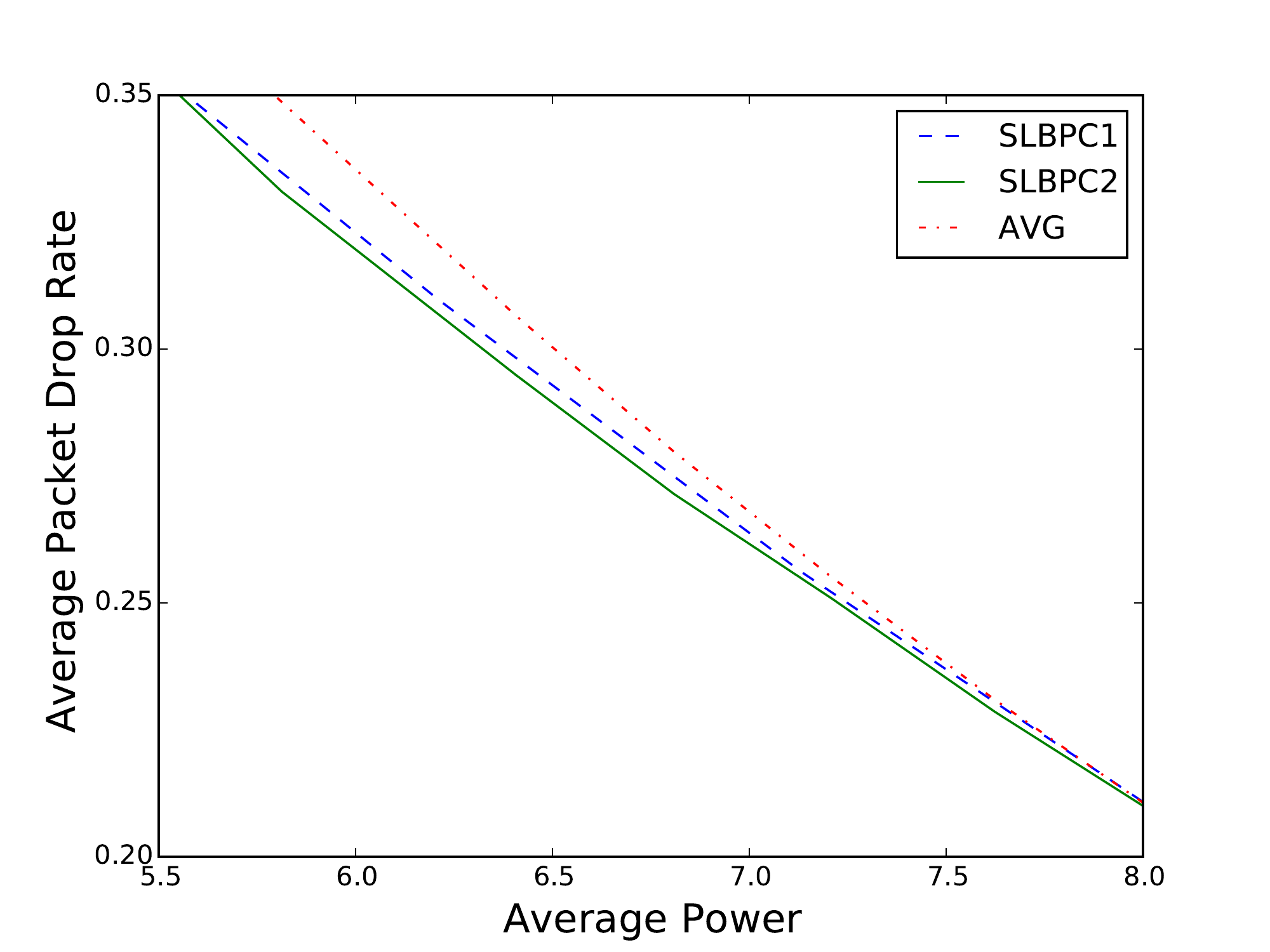}
    \caption{Packet Drop Rate vs. Power\label{fig:pow_err}}
  \end{subfigure}%
  \begin{subfigure}{0.5\textwidth}
    \includegraphics[width=\textwidth]{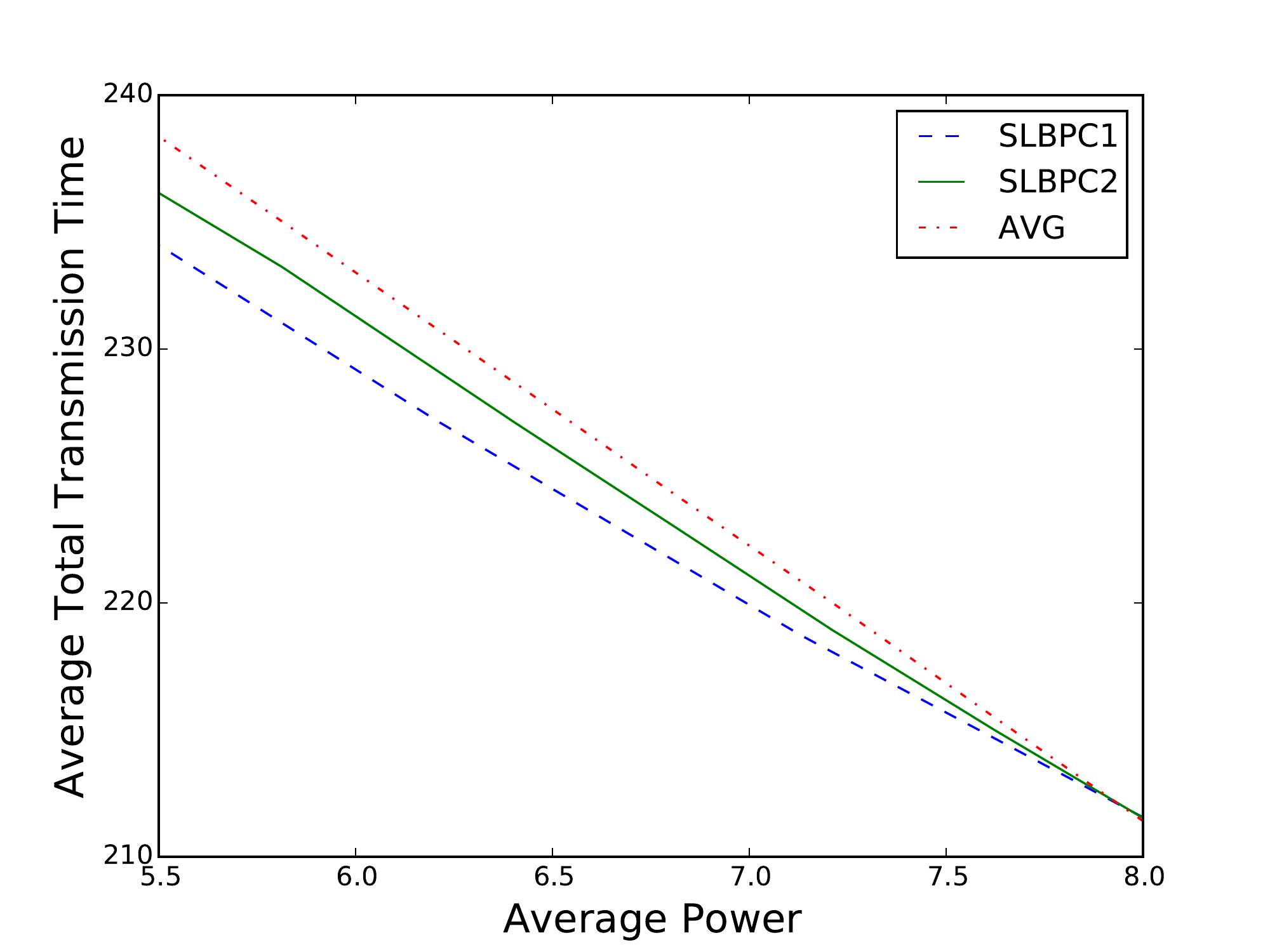}
    \caption{Total Transmission Time vs. Power\label{fig:pow_time}}
  \end{subfigure}
  \caption{ Trade-off curves comparing SLBPC1, SLBPC2, and AVG. 
    \label{fig:curves}
  }
\end{figure}

We see that for a given power level, SLBPC1 achieves a lower packet
drop rate than AVG and that SLBPC2 achieves a lower packet drop rate
than SLBPC1. The advantage of the SLBPC algorithms over AVG is most
apparent in the low power regime. This demonstrates that the SLBPC
algorithms enable a higher fidelity packet stream with less power
used. The SLBPC algorithms also reduce the total amount of
transmission time. However, it is interesting to note that SLBPC1 is
better at reducing the total transmission time than SLBPC2 is. Recall
that SLBPC1 is a function only of the backlog so SLBPC1 makes reducing
backlog pressure the primary objective. Since backlog pressure is a
proxy for total delay, this behavior is expected.

%\afterpage{\clearpage}
\section{Conclusions\label{sec:conclusions}}
This paper examines the problem of transmitting packets across a
stochastically fluctuating wireless link while balancing power usage,
overall latency, and jitter constraints. By considering a special case
of the optimal transmitter power control problem, we have developed
mathematically justified heuristics that are useful when designing
sub-optimal power control schemes. In particular, we have analytically
characterized how the optimal power control should vary with the
backlog and how the optimal power control should react to approaching
deadlines. By incorporating these structural properties into the
Sublinear-Backlog Power Control (SLBPC) scheme, we have demonstrated
how our theoretical results can be leveraged to build low-complexity
approximations of the optimal power control. Monte Carlo simulations
show that the SLBPC scheme is a good approximation for the optimal
power control scheme. An additional simulations show that SLBPC
outperforms benchmark algorithms in realistic settings.

\section*{References}
\bibliographystyle{elsarticle-num}
\bibliography{NMaster_NBambos_PowerEfficientStreaming}

\begin{thebibliography}{10}
\expandafter\ifx\csname url\endcsname\relax
  \def\url#1{\texttt{#1}}\fi
\expandafter\ifx\csname urlprefix\endcsname\relax\def\urlprefix{URL }\fi
\expandafter\ifx\csname href\endcsname\relax
  \def\href#1#2{#2} \def\path#1{#1}\fi

\bibitem{Fattah_WirelessScheduling_2002}
H.~Fattah, C.~Leung, An overview of scheduling algorithms in wireless
  multimedia networks, IEEE Wireless Communications 9~(5) (2002) 76--83.

\bibitem{Dua_CD2_2010}
A.~Dua, C.~Chan, N.~Bambos, J.~Apostolopoulos, {Channel, deadline, and
  distortion (CD2) aware scheduling for video streams over wireless}, IEEE
  Transactions on Wireless Communications 9~(3) (2010) 1001--1011.

\bibitem{Asadi_D2D_2013}
A.~Asadi, Q.~Wang, V.~Mancuso, A survey on device-to-device communication in
  cellular networks, IEEE Communications Surveys Tutorials~(99).

\bibitem{Zulhasnine_D2DLTE_2010}
M.~Zulhasnine, C.~Huang, A.~Srinivasan, {Efficient resource allocation for
  device-to-device communication underlaying LTE network}, in: IEEE
  International Conference on Wireless and Mobile Computing, Networking and
  Communications, 2010, pp. 368--375.

\bibitem{Camps-Mu_D2DWifi_2013}
D.~Camps-Mur, A.~Garcia-Saavedra, P.~Serrano, {Device-to-device communications
  with Wi-Fi Direct: overview and experimentation}, IEEE Wireless
  Communications 20~(3) (2013) 96--104.

\bibitem{Li_PowerPlayout_2006}
Y.~Li, A.~Markopoulou, N.~Bambos, J.~Apostolopoulos, Joint power-playout
  control for media streaming over wireless links, IEEE Transactions on
  Multimedia 8~(4) (2006) 830--843.

\bibitem{Xing_D2DWifi_2009}
B.~Xing, K.~Seada, N.~Venkatasubramanian, {An Experimental Study on Wi-Fi
  Ad-Hoc Mode for Mobile Device-to-Device Video Delivery}, in: IEEE INFOCOM
  Workshops 2009, 2009, pp. 1--6.

\bibitem{IEEE_80211}
{IEEE 802.11 Working Group}, {Wireless LAN medium access control (MAC) and
  physical layer (PHY) specifications}.

\bibitem{Chiang_Power_2008}
M.~Chiang, P.~Hande, T.~Lan, C.~W. Tan, Power control in wireless cellular
  networks, Foundations and Trends{\textregistered} in Networking 2~(4) (2008)
  381--533.

\bibitem{Le_D2DResources_2012}
L.~B. Le, Fair resource allocation for device-to-device communications in
  wireless cellular networks, in: IEEE Global Communications Conference, 2012,
  pp. 5451--5456.

\bibitem{Yu_D2DResources_2011}
C.-H. Yu, K.~Doppler, C.~Ribeiro, O.~Tirkkonen, Resource sharing optimization
  for device-to-device communication underlaying cellular networks, IEEE
  Transactions on Wireless Communications 10~(8) (2011) 2752--2763.

\bibitem{Kandukuri_Outage_2002}
S.~Kandukuri, S.~Boyd, Optimal power control in interference-limited fading
  wireless channels with outage-probability specifications, IEEE Transactions
  on Wireless Communications 1~(1) (2002) 46--55.

\bibitem{ONeill_NUM_2008}
D.~O. Neill, A.~J. Goldsmith, S.~Boyd, Optimizing adaptive modulation in
  wireless networks via utility maximization, in: IEEE International Conference
  on Communications, IEEE, 2008, pp. 3372--3377.

\bibitem{Neely_Delay_2013}
M.~J. Neely, Delay-based network utility maximization, IEEE/ACM Transactions on
  Networking 21~(1) (2013) 41--54.

\bibitem{Neely_EnergyDelay_2007}
M.~J. Neely, Optimal energy and delay tradeoffs for multiuser wireless
  downlinks, IEEE Transactions on Information Theory 53~(9) (2007) 3095--3113.

\bibitem{Kim_Broadcasts_2004}
J.-H. Kim, K.-Y. Chwa, Scheduling broadcasts with deadlines, Theoretical
  Computer Science 325~(3) (2004) 479--488.

\bibitem{Argon_Scheduling_2008}
N.~T. Argon, S.~Ziya, R.~Righter, Scheduling impatient jobs in a clearing
  system with insights on patient triage in mass casualty incidents,
  Probability in the Engineering and Informational Sciences 22~(03) (2008)
  301--332.

\bibitem{Dalal_Impatient_2005}
A.~C. Dalal, S.~Jordan, Optimal scheduling in a queue with differentiated
  impatient users, Performance Evaluation 59~(1) (2005) 73--84.

\bibitem{George_ServiceRateControl_2001}
J.~M. George, J.~M. Harrison, Dynamic control of a queue with adjustable
  service rate, Operations Research 49~(5) (2001) 720--731.

\bibitem{Master_HOL_2014}
N.~Master, N.~Bambos, {Power control for wireless streaming with HOL packet
  deadlines}, in: IEEE International Conference on Communications, IEEE, 2014,
  pp. 2263--2269.

\bibitem{Master_ACC_2015}
N.~Master, N.~Bambos, Service rate control for jobs with decaying value, in:
  American Control Conference, IEEE, 2015, pp. 3255--3260.

\bibitem{Gilbert_1960}
E.~N. Gilbert, Capacity of a burst-noise channel, Bell system technical journal
  39~(5) (1960) 1253--1265.

\bibitem{Wang_FSMC_1995}
H.-S. Wang, N.~Moayeri, {Finite-state Markov channel -- a useful model for
  radio communication channels}, IEEE Transactions on Vehicular Technology
  44~(1) (1995) 163--171.

\bibitem{Kandukuri_PCMA_2000}
N.~Bambos, S.~Kandukuri, Power controlled multiple access (pcma) in wireless
  communication networks, in: Joint Conference of the IEEE Computer and
  Communications Societies (INFOCOM), Vol.~2, 2000, pp. 386--395.

\bibitem{Son_Sigmoid_2006}
D.~Son, B.~Krishnamachari, J.~Heidemann, Experimental analysis of concurrent
  packet transmissions in wireless sensor networks, ACM SenSys, Boulder, USA.

\bibitem{Walrand_Queueing_1988}
J.~Walrand, An introduction to queueing networks, Vol. 165, Prentice Hall
  Englewood Cliffs, NJ, 1988.

\bibitem{PutermanMDP}
M.~L. Puterman, Markov decision processes: discrete stochastic dynamic
  programming, John Wiley \& Sons, 2009.

\bibitem{Alay_FEC_2009}
{\"O}.~Alay, T.~Korakis, Y.~Wang, S.~Panwar, An experimental study of packet
  loss and forward error correction in video multicast over ieee 802.11 b
  network, in: IEEE Consumer Communications and Networking Conference, IEEE,
  2009, pp. 1--5.

\bibitem{Barron_MDL_1998}
A.~Barron, J.~Rissanen, B.~Yu, The minimum description length principle in
  coding and modeling, IEEE Transactions on Information Theory 44~(6) (1998)
  2743--2760.

\bibitem{Topkis_Minimizing_1978}
D.~M. Topkis, Minimizing a submodular function on a lattice, Operations
  research 26~(2) (1978) 305--321.

\bibitem{Serfozo_Monotone_1976}
R.~F. Serfozo, Monotone optimal policies for markov decision processes, in:
  Stochastic Systems: Modeling, Identification and Optimization, II, Springer,
  1976, pp. 202--215.

\bibitem{Veatch_Monotone_1992}
M.~H. Veatch, L.~M. Wein, Monotone control of queueing networks, Queueing
  Systems 12~(3-4) (1992) 391--408.

\bibitem{Dragut_Lattice_2010}
A.~Dragut, Structured optimal policies for markov decision processes: lattice
  programming techniques, Wiley Encyclopedia of Operations Research and
  Management Science.

\bibitem{Cruz_INFOCOM_2003}
R.~L. Cruz, A.~V. Santhanam, Optimal routing, link scheduling and power control
  in multihop wireless networks, in: Joint Conference of the IEEE Computer and
  Communications Societies (INFOCOM), Vol.~1, IEEE, 2003, pp. 702--711.

\bibitem{Goyal_INFOCOM_2003}
M.~Goyal, A.~Kumar, V.~Sharma, Power constrained and delay optimal policies for
  scheduling transmission over a fading channel, in: Joint Conference of the
  IEEE Computer and Communications Societies (INFOCOM), Vol.~1, IEEE, 2003, pp.
  311--320.

\bibitem{Bettesh_IEEEIT_2006}
I.~Bettesh, S.~Shamai, Optimal power and rate control for minimal average
  delay: The single-user case, IEEE Transactions on Information Theory 52~(9)
  (2006) 4115--4141.

\bibitem{CLRS_2001}
T.~H. Cormen, C.~E. Leiserson, R.~L. Rivest, C.~Stein, et~al., Introduction to
  algorithms, MIT press Cambridge, 2001.

\bibitem{Boyd_CVXBook}
S.~Boyd, L.~Vandenberghe, Convex optimization, Cambridge university press,
  2004.

\end{thebibliography}

\end{document}